\newenvironment{di}{\list{$\bullet$}{\itemsep0pt\parsep0pt}}{\endlist}
\newtheorem{theorem}{Theorem}
\newtheorem{lemma}[theorem]{Lemma}
\newcommand{\SITES}{\mathcal{S}}
\newcommand{\R}{\mathbb{R}}
\newcommand{\conv}{\operatorname{CH}}
\newcommand{\MA}{\mathfrak{M}}
\newcommand{\DIST}{\Psi}
\newcommand{\UDIST}{\Phi}
\newcommand{\fpvd}{\mathfrak{F}}
\newcommand{\REG}{\mathcal{R}}
\newcommand{\TREE}{\mathcal{T}}
\newcommand{\FI}{\fpvd(\SITES_{1})}
\newcommand{\FII}{\fpvd(\SITES_{2})}
\newcommand{\FS}{\fpvd(\SITES)}
\newcommand{\s}{^{\ast}}
\newcommand{\cyl}{\mathfrak{C}}
\newcommand{\pocket}{\mathcal{P}}
\newcommand{\etal}{{et al.}\xspace}
\definecolor{complaincolor}{rgb}{1.0,0.7,0.3}
\newcommand{\complain}[1]{\textcolor{complaincolor}{\bfseries#1}\marginpar{\textcolor{complaincolor}{!!!}}}
\renewcommand{\complain}[1]{}
\newcommand{\SL}[1]{\complain{Sylvain says: #1}}
\definecolor{orange}{rgb}{1,0.5,0}
\definecolor{blue}{rgb}{0,0,1}
\definecolor{red}{rgb}{1,0,0}
\definecolor{green}{rgb}{0,0.6,0}
\def\NOTE#1#2#3{{\begin{quote}\marginpar[\hfill\textcolor{#1}{#2}]{\textcolor{#1}{#2}}{\textcolor{#1}{\textsf{[\![{#3}]\!]}}}\end{quote}}}
\def\referee#1{\NOTE{orange}{$\circledR$}{#1}}
\def\respond#1{\NOTE{blue}{\textcircled{\textsc{a}}}{Authors:~{#1}}}
\def\referee#1{}
\def\respond#1{}
\let\leq\leqslant
\let\geq\geqslant
\title{Farthest-Polygon Voronoi Diagrams%
  \thanks{This research was supported by 
    the INRIA équipe associée \emph{KI}, the Brain
    Korea~21 Project, the School of Information Technology, KAIST, 
    and the Korea Science and Engineering
    Foundation Grant~R01-2008-000-11607-0 funded by the Korean
    government.}}
\author{%
  Otfried Cheong%
  \thanks{Dept.~of Computer Science, KAIST, Daejeon, Korea.
    \{otfried,mira\}@tclab.kaist.ac.kr.}
  \and
 Hazel Everett\thanks{Université Nancy 2, LORIA, Nancy, 
    France. Hazel.Everett@loria.fr.}
  \and
	 Marc Glisse\thanks{INRIA Saclay -- Île-de-France, Orsay, France.
	Marc.Glisse@inria.fr.}
  \and
  Joachim Gudmundsson%
  \thanks{National ICT Australia Ltd., Sydney, Australia.
    joachim.gudmundsson@nicta.com.au. National ICT Australia is funded
    through the Australian Government's Backing Australia's Ability
    initiative, in part through the Australian Research Council.}
  \and
  Samuel Hornus\thanks{INRIA Nancy Grand-Est, LORIA, Nancy, 
    France. Firstname.Name@inria.fr.}
  \and
  Sylvain Lazard\footnotemark[6]
  \and
  Mira Lee\footnotemark[2]
  \and
  Hyeon-Suk Na%
  \thanks{School of Computing, Soongsil University, Seoul, Korea.
    hsnaa@ssu.ac.kr.}}
\begin{document}

\maketitle

\begin{abstract}  
   Given a family of $k$ disjoint connected polygonal sites in general
  position and of total complexity~$n$, we consider the farthest-site Voronoi
  diagram of these sites, where the distance to a site is the distance to a
  closest point on it. We show that the complexity of this diagram is $O(n)$,
  and give an $O(n \log^{3} n)$ time algorithm to compute it. We also prove a
  number of structural properties of this diagram. In particular, a Voronoi
  region may consist of $k-1$ connected components, but if one component is
  bounded, then it is equal to the entire region. 
  \end{abstract}

\section{Introduction}

Consider a family $\SITES$ of geometric objects, called sites, in
the plane.  The farthest-site Voronoi diagram of $\SITES$ subdivides
the plane into regions, each region associated with one site $P\in
\SITES$, and containing those points $x\in \R^{2}$ for which $P$ is
the farthest among the sites of~$\SITES$.

While closest-site Voronoi diagrams have been studied
extensively~\cite{ak-vd-00}, their farthest-site cousins have received
somewhat less attention. For the case of (possibly intersecting) line segment
sites, Aurenhammer~\etal~\cite{adk-flsvd-06} recently presented an $O(n \log
n)$ time algorithm to compute their farthest-site diagram.

Farthest-site Voronoi diagrams have a number of important applications.
Perhaps the most well-known one is the problem of finding a smallest disk that
intersects all the sites~\cite{ahiklmps-fcvdr-01}. This disk can be computed
in linear time once the diagram is known, since its center is a vertex or lies
on an edge of the diagram.
Another standard application is to build a data structure to quickly report
the site farthest from a given query point. 

We are here interested in the case of complex sites with non-constant
description complexity.  This setting was perhaps first considered by
Abellanas~\etal~\cite{ahiklmps-fcvdr-01}: their sites are finite
point sets, and so the distance to a site is the distance to the
nearest point of that site.  Put differently, they consider $n$ points
colored with $k$ different colors, and their \emph{farthest-color
Voronoi diagram} subdivides the plane depending on which color is
farthest away.  The motivation for this problem is the one mentioned
above, namely to find a smallest disk that contains a point of each
color---this is a facility location problem where the goal is to find
a position that is as close as possible to each of $k$ different types
of facilities (such as schools, post offices, supermarkets, etc.). In
a companion paper~\cite{ahiklmps-scso-01} the authors study other
color-spanning objects.

The farthest-color Voronoi diagram is easily seen to be the projection
of the upper envelope of the $k$ Voronoi surfaces corresponding to the
$k$ color classes. Huttenlocher~\etal~\cite{hks-uevsi-93} show that
this upper envelope has complexity $\Theta(nk)$ for $n$ points, and
can be computed in time $O(nk \log n)$ (see also the book by Sharir
and Agarwal~\cite[\S 8.7]{sa-dsstg-95}.

\referee{Referee \#1: Figure 1d: It would be helpful to indicate which of the regions belongs to which polygon.}
\respond{This is now indicated in the  caption of Fig. 1c and directly on the figure, with an arrow and a label.}

Van Kreveld and Schlechter~\cite{KS05} consider the farthest-site
Voronoi diagram for a family of disjoint simple polygons.  Again, they
are interested in finding the center of the smallest disk intersecting
or touching all polygons, which they then apply to the cartographic
problem of labeling groups of islands.  Their algorithm is based on
the claim that this \emph{farthest-polygon Voronoi diagram} is an
instance of the \emph{abstract farthest-site Voronoi diagram} defined
by Mehlhorn~\etal~\cite{MMR01}---but this claim is false, since the
bisector of two disjoint simple polygons can be a closed curve, see
Fig.~\ref{fig:closed-curve}(a).  In particular, Voronoi regions can be
bounded (which is impossible for regions in abstract farthest-site
Voronoi diagrams).
\begin{figure}
  \centerline{\includegraphics{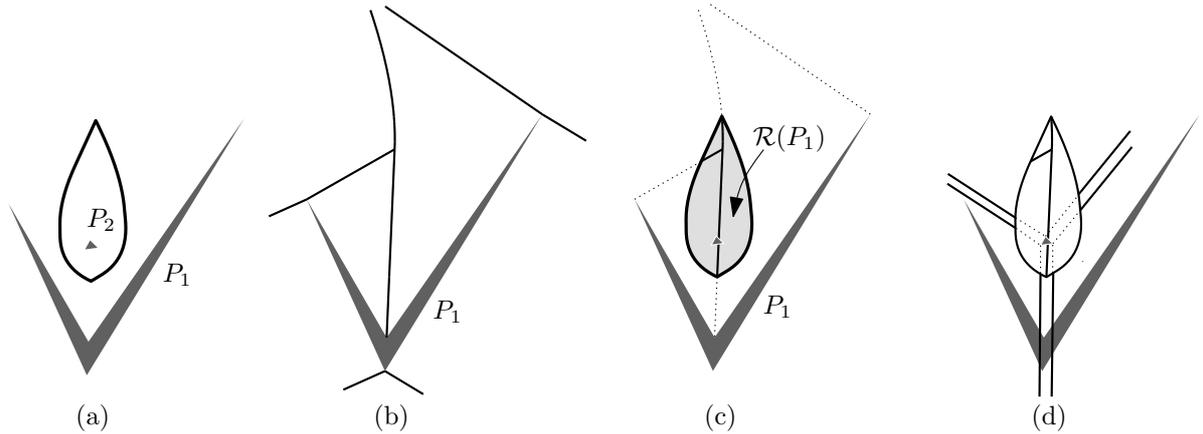}}
  \caption{(a) The bisector of two polygons can be a closed curve.
    (b) The medial axis $\MA(P_{1})$ outside of $P_{1}$.
    (c) The Voronoi region $\REG(P_1)$, shown in light gray. 
    (d) The farthest-polygon Voronoi diagram
    $\fpvd(\{P_1, P_2\})$.}
  \label{fig:closed-curve}
\end{figure}

Note that the farthest-polygon Voronoi diagram can again be expressed
as the upper envelope of $k$ Voronoi surfaces---but this does not seem
to lead to anything stronger than near-quadratic complexity and time
bounds. 

We show in this paper that, in fact, the complexity of the farthest-polygon
Voronoi diagram of $k$ disjoint simple polygons of total complexity $n$
is~$O(n)$.
We also show some structural properties of this diagram. In
particular, Voronoi regions can be disconnected, and in fact, the
region of a polygon $P$ can consist of up to $k-1$ connected
components. However, if one connected component is bounded then it is
equal to the entire region of $P$; moreover, the region is simply
connected and the convex hull of $P$ contains another polygon in its
interior.  Furthermore, the Voronoi regions consist, in total, of at
most $2k-2$ connected components, and this bound is tight.


Algorithms for  computing closest-site Voronoi diagrams  make use of the
fact that Voronoi regions surround and are close to their sites.
Similarly, algorithms for computing farthest-site Voronoi diagrams
make use of the unboundedness of the Voronoi regions, and often
build up regions from infinity~\cite{adk-flsvd-06}.  The difficulty
in computing farthest-polygon Voronoi diagrams is that neither of
these properties holds: Voronoi regions can be bounded, and finding
the location of these bounded regions is the bottleneck in the
computation.

We give a divide-and-conquer algorithm with running time $O(n \log^{3}
n)$ to compute the farthest-polygon Voronoi diagram.  Our key idea is
to build point location data structures for the partial diagrams
already computed, and to use parametric search on these data
structures to find suitable starting vertices for the merging step.
This idea may find applications in 
 the computation of other
complicated Voronoi diagrams.  Our algorithm implies an $O(n \log^{3}
n)$ algorithm to compute the smallest disk touching or intersecting
all the input polygons.

We note that for a family of disjoint \emph{convex} polygons,
finding the smallest disk touching all of them is much easier, and can
be solved in time $O(n)$, where $n$ is the total complexity of the
polygons~\cite{JMB96}.

In Section~\ref{sec:prelim}, we start with some preliminaries and give a
definition of farthest-polygon Voronoi diagrams. We prove, in
Section~\ref{sec:complexity}, some properties on the structure of these
diagrams and bound their complexity. In Section~\ref{sec:computing}, we
present an algorithm for computing such diagrams and conclude in
Section~\ref{sec:conclusion}.

\section{Preliminaries} 
\label{sec:prelim}

We consider a family $\SITES$ of $k$ pairwise-disjoint polygonal sites
of total complexity~$n$.  Here, a \emph{polygonal site} of complexity
$m$ is the union of $m$ line segments, whose relative interiors are pairwise
disjoint, but whose union is connected, see Fig.~\ref{fig:site}. (In
other words, the corners\footnote{We reserve the word \emph{vertex} for
vertices of the Voronoi diagram.} and edges of a polygonal site form a
one-dimensional connected simplicial complex in the plane.)  In
particular, the boundary of a simple polygon is a polygonal site.  For
a point $x\in \R^{2}$, the distance $d(x, P)$ between $x$ and a site
$P \in \SITES$ is the Euclidean  distance from $x$ to the closest point on~$P$.


A \emph{pocket} $\pocket$ of $P$ is a connected component of
$\conv(P)\setminus P$, where $\conv(P)$ is the convex hull of~$P$.  A
pocket $\pocket$ is \emph{bounded} if it coincides with a bounded connected
component of $\R^{2} \setminus P$, \emph{unbounded} otherwise.  

The \emph{features} of a site $P$ are its corners and edges.  
We say  that a  disk \emph{touches a corner} if the corner lies on its boundary.
A disk \emph{touches an edge} if the closed edge touches the disk in one point and if its supporting
line is tangent to the disk. 
A disk  \emph{touches a site} $P$ if the disk touches some of $P$'s features   and if the
disk's interior does not intersect $P$.

\referee{Referee \#1: "A disk touches a site P if the disk touches some of P's features and if the
disk's interior does not intersect P" But it may intersect the convex hull of P, if P is a polygonal
site?}
\respond{True, but we believe this is too obvious to be emphasized.}

\begin{figure}
  \centering
  \includegraphics{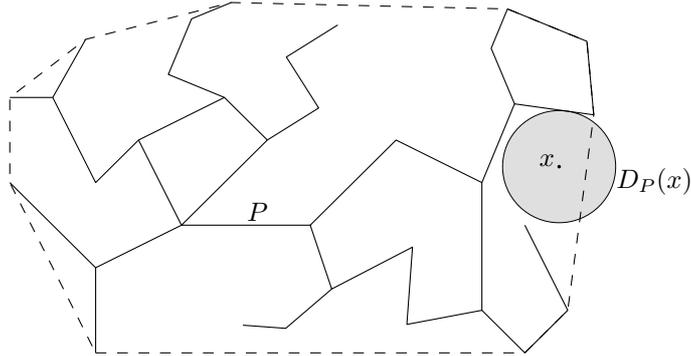}
  \caption{A polygonal site $P$ with three bounded pockets and seven unbounded ones, and its convex hull.
  A point $x$ and the disk $D_P(x)$.} 
  \label{fig:site}
\end{figure}

We assume that the family $\SITES$ is in general position, that is, no
disk touches four features, no line contains three corners, and no two
edges are parallel. 

Before defining the farthest-polygon Voronoi diagram of a family of polygonal sites, 
we define the medial axis of a polygonal site.
\paragraph{Medial axes.}
For a site $P \in \SITES$, we define
the function $\DIST_{P}: \R^{2} \mapsto \R$ as $\DIST_{P}(x) = d(x,P)$. 
\referee{Referee \#2: Since functions $\Psi_P$ and $\Phi$ do not take on negative values, you might use
the subscript $\geq 0$ to denote their image spaces.}
\respond{This is correct but we believe that the benefit of specifying it is not worth introducing a new
notation $\R^{+}$ or $\R^{\geq}$ which would need to be defined.}
The graph of $\DIST_{P}$ is
a \emph{Voronoi surface}; it is the lower envelope of circular cones for each corner of $P$ and of
rectangular wedges for each edge of~$P$.  The orthogonal projection of this surface on the plane
induces a subdivision of the plane:
each 2D cell of this subdivision corresponds to a feature $w$ of $P$, and it is the set of all points $x\in
\R^{2}$ such that $w$ is or contains the unique closest point on $P$ to~$x$ (here, edges of $P$ are
considered relatively open, so the cell of a corner is disjoint from the cells of its incident
edges).  
The  \emph{medial axis} of $P$, denoted $\MA(P)$, consists of  the \emph{arcs} and \emph{vertices}
formed by  the boundaries between these cells. By extension, we call the cells of the subdivision the \emph{cells} of the medial axis.


For a point $x\in \R^{2}$ and a site $P$, let $D_{P}(x)$ denote the
largest disk centered at $x$ whose interior does not intersect~$P$
(and which is therefore touching~$P$, see Fig.~\ref{fig:site}).  If $D_{P}(x)$ touches $P$ in a
single feature $w$, then $x$ lies in the cell of the medial axis subdivision associated with~$w$.  If $D_{P}(x)$
touches $P$ in two different features, then $x$ lies on an arc of
$\MA(P)$, and if $D_{P}(x)$ touches $P$ in three different features,
then $x$ is a vertex of~$\MA(P)$.  
Note that $\MA(P)$ contains some special arcs, called \emph{spokes}, that separate the
cell of a corner from the cell of an incident edge (see Fig.~\ref{fig:closed-curve}(b)); a spoke  arc is the locus of centers
of circles that intersect $P$ only at  a corner and that are tangent to the line supporting one of its incident edges.\footnote{Note that the medial axis of $P$ is often defined as the locus of the centers of circles that are
tangent to $P$ in two or more points. Such a definition would not
include spokes as part of the medial axis.}




The medial axis $\MA(P)$  restricted to $\R^{2} \setminus P$ forms a forest.  
By this definition, the arc endpoints that lie on $P$ (at a corner) are not part of the forest; we
consider nonetheless these endpoints to be leaves of the forest. It follows that several leaves  may
coincide at a corner. More precisely, each convex
angle around a corner induces a leaf at that corner, and each reflex angle around a corner induces
two leaves incident to two spokes  at that corner. Spokes are always incident to a leaf  at a corner.  
Notice that the
leaves  always lie at  the corners
of~$P$ or at infinity.
%

The medial axis $\MA(P)$ consists of exactly one tree for each pocket
of~$P$, and two isolated spokes (with endpoints at infinity) for each
edge of $P$ appearing on $\conv(P)$.  The tree $\TREE$ of an unbounded
pocket $\pocket$ contains exactly one infinite arc, all other leaves
of $\TREE$ are corners of $\pocket$.  The leaves of the tree $\TREE$
of a bounded pocket $\pocket$ are exactly the corners of~$\pocket$.

\paragraph{Farthest-polygon Voronoi diagrams.}
We now consider the function $\UDIST: \R^{2} \mapsto \R$ defined as
$\UDIST(x) = \max_{P \in \SITES} \DIST_{P}(x)$.  The graph of $\UDIST$
is the upper envelope of the surfaces $\DIST_P$, for $P \in \SITES$.
The surface $\UDIST$ consists of conical and planar patches from the
Voronoi surfaces $\DIST_{P}$, and the arcs separating such patches are
either arcs of a Voronoi surface $\DIST_{P}$ (we call these
\emph{medial axis arcs}), or intersection curves of two Voronoi
surfaces $\DIST_{P}$ and $\DIST_{Q}$ (we call these \emph{pure arcs}).
\referee{Referee \#2: Please explain what types of curves occur (you are going to need that their
degree is bounded by 2).}
\respond{Done here and line 2/3 of the following paragraph for the 2D arcs.}
These arcs are hyperbolic arcs that lie in vertical planes, parabolic arcs, or straight-line segments.
They correspond respectively
to the intersection of two cones, a cone and a plane, and two planes.
The vertices of $\UDIST$ are of one of the following three types:
\begin{di} 
\item Vertices of one Voronoi surface $\DIST_{P}$. We 
  call these \emph{medial axis vertices}.
\item Intersections of an
  arc of $\DIST_{P}$ with a patch of another surface $\DIST_{Q}$. We
  call these \emph{mixed vertices}.
\item Intersections of patches of
  three Voronoi surfaces $\DIST_{P}, \DIST_{Q}, \DIST_{R}$. We call
  these \emph{pure vertices}.
\end{di}

The projection of the graph of  $\UDIST$ onto the plane induces 
the \emph{farthest-polygon Voronoi diagram} $\FS$ of $\SITES$. It is a
subdivision of the plane into cells, arcs, and vertices.  
The arcs are either parabolic or straight, since   hyperbolic arcs that lie in 
vertical planes project into line segments.
Each cell
corresponds to a feature $w$ of a site $P$, the feature is the nearest
among the features of $P$, but is further away than the nearest
feature of any other site.  
The arcs and vertices of $\FS$ are the orthogonal projections of the arcs and vertices of $\UDIST$
and they inherit their types (pure, mixed, and medial
axis).
The farthest-polygon Voronoi diagram is
therefore completely analogous to the farthest-color Voronoi
diagram~\cite{ahiklmps-fcvdr-01}.

For a point $x\in \R^{2}$, let  $D(x) = D_{\SITES}(x)$ denote the
smallest disk centered at~$x$ that intersects all sites $P\in
\SITES$. By definition, there is always at least one site that touches
$D(x)$ without intersecting its interior, and the radius of $D(x)$ is
equal to~$\UDIST(x)$.  By our general position assumption, only the
following five cases can occur. 
\begin{di}
\item If $D(x)$ touches one site $P$ in only one feature $w$, and all
  other sites intersect the interior of $D(x)$, then $x$ lies in a
  cell of $\FS$, and the cell belongs to the feature $w$
  of~$P$.
\item If $D(x)$ touches one site $P$ in two or three features, and all
  other sites intersect the interior of $D(x)$, then $x$ lies on a
  medial axis arc or medial axis vertex of $\FS$, and is incident to
  cells belonging to different features of~$P$.
\item If $D(x)$ touches one feature $w$ of site $P$, one feature $u$
  of site $Q$, and all other sites intersect the interior of $D(x)$,
  then $x$ lies on a pure arc separating cells belonging to features
  $w$ and~$u$.
\item If $D(x)$ touches two features of site $P$ and one feature of
  site $Q$, and all other sites intersect the interior of $D(x)$, then
  $x$ is a mixed vertex incident to a medial axis arc of~$P$.
\item If $D(x)$ touches one feature each of three sites $P$, $Q$, and
  $R$, and all other sites intersect the interior of $D(x)$, then $x$
  is a pure vertex.
\end{di}
Put differently, vertices of $\FS$ are points $x\in \R^{2}$
where $D(x)$ touches three distinct features of sites.  If all three
features are on the same site, the vertex is a medial axis vertex.  If
the three features are on three distinct sites, then the vertex is a
pure vertex.  In the remaining case, if two features are on a site
$P$, and the third feature is on a different site~$Q$, the vertex is a
mixed vertex.


\begin{figure}[t]
  \centering
  \includegraphics{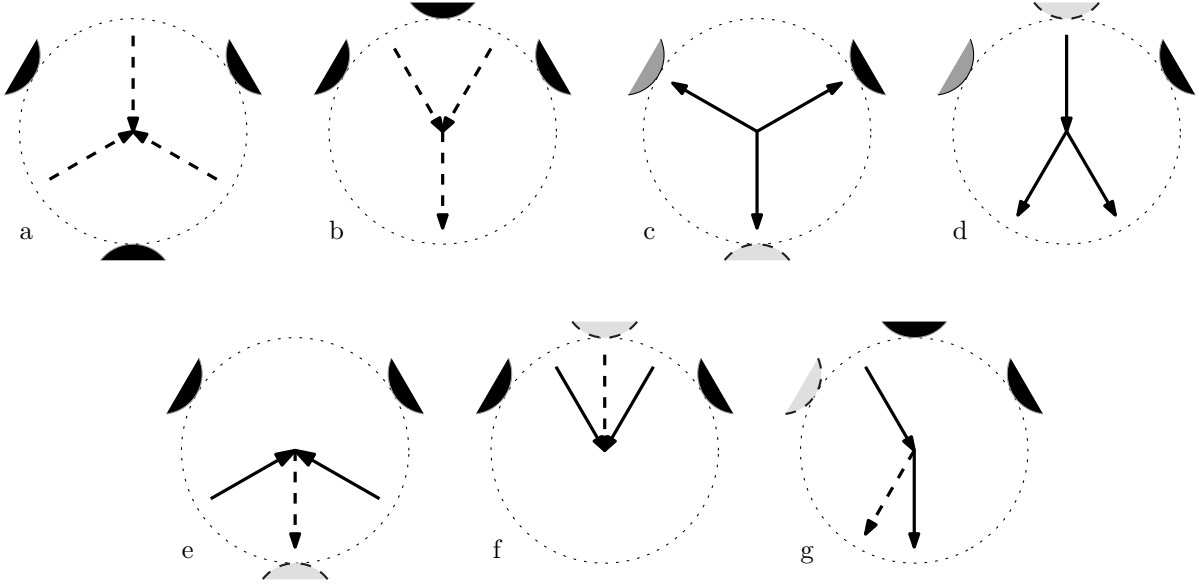}
  \caption{The different types of vertices in the farthest-polygon
    Voronoi diagram. Pure arcs 
    are shown solid, medial axis arcs dashed. The arrows indicate the
    direction of increasing~$\UDIST(x)$ in a neighborhood of the vertex.}
  \label{fig:vertices}
\end{figure}

Fig.~\ref{fig:vertices} illustrates all different vertex types.  Figs.~\ref{fig:vertices}(a) and \ref{fig:vertices}(b)
show the possible medial axis vertices; they 
differ in whether the triangle formed by the three features contains
the vertex or not. Similarly, Figs.~\ref{fig:vertices}(c) and \ref{fig:vertices}(d) show the pure vertex
types. 
The bottom row
shows the three possible types of mixed vertices.  Again, we have to
distinguish whether the three features enclose the vertex or not, and
in the latter case we need to distinguish which two features are on
the same site.

Consider an arc $\alpha$ of $\FS$.  If a point  $x$ moves continuously along $\alpha$,
then $\UDIST(x)$---which is the radius of $D(x)$---changes
continuously.  The local shape of $\FS$ in a neighborhood of a vertex
$v$ is determined solely by the features defining the vertex.  For
each type of vertex shown in Fig.~\ref{fig:vertices}, we can therefore
uniquely determine whether $\UDIST(x)$ increases or decreases along an
arc in a \emph{neighborhood} of the vertex.  We indicate the direction of
increasing $\UDIST(x)$ along an arc by arrows in the figure.


In addition to the seven types of vertices discussed above, we 
need to consider \emph{vertices at infinity}, that is, we consider the
semi-infinite arcs of $\FS$ to have a degree-one vertex at
their end.  For a vertex $v$ at infinity, the ``disk'' $D(v)$ is a
halfplane, and we have two cases:
\begin{di}
\item If $D(v)$ touches one site $P$ in two features, and all other
  sites intersect the interior of $D(v)$, then $D(v)$ is the
  ``infinite'' endpoint of a medial axis arc, and we consider it a
  medial axis vertex at infinity.
\item If $D(v)$ touches two distinct sites, and all other sites
  intersect its interior, then $D(v)$ is the ``infinite'' endpoint of
  a pure edge, and we consider it a pure vertex at infinity.
\end{di}

Finally, we define the \emph{Voronoi region} of a site $P \in
\SITES$.  The Voronoi region $\REG(P)$ of $P$ is simply the union of
all cells, medial axis arcs, and medial axis vertices of $\FS$
belonging to features of~$P$.  Voronoi regions are not necessarily
connected, as we will see in the next section.  We call each
connected component of a Voronoi region a \emph{Voronoi component}.

\section{Structure and complexity}
\label{sec:complexity}

%
%


In this section, we prove some properties on the structure of farthest-polygon
Voronoi diagrams of polygonal sites and we bound their complexity.
Farthest-polygon Voronoi diagrams contain three different types of vertices, as
defined in the previous  section. Note first that the number of medial axis
vertices is bounded by the total complexity of all medial axes, which is
$O(n)$.

Now, we first bound the number of mixed vertices. For that purpose, we show
that when a tree $\TREE$ of $\MA(P)$ intersects the Voronoi region~$\REG(P)$ of
$P \in \SITES$, then the intersection $\TREE \cap \REG(P)$ is a connected
subtree. We start with a preliminary lemma. 

\begin{lemma}
  \label{lem:path}
  Let $\gamma$ be a path in $\MA(P)$, let $Q \in \SITES\setminus\{P\}$
  be another site, and let $\gamma_{Q}$ be the 
part of $\gamma$ that is closer to $Q$ than to~$P$.
Then $\gamma_{Q}$ is a
  connected subset of $\gamma$, that is, a subpath.
\end{lemma}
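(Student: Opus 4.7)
My plan is to proceed by contradiction, making essential use of the connectedness of the site $Q$. Parametrize $\gamma$ continuously by $t$, and for each $t$ denote by $B(t) := D_{P}(\gamma(t))$ the maximal empty disk (with respect to $P$) centered at $\gamma(t)$, whose radius is $r(t) := \DIST_{P}(\gamma(t))$. The key reformulation is that $\gamma(t) \in \gamma_{Q}$ if and only if $Q \cap B(t)^{\circ} \neq \emptyset$: indeed, $\gamma(t)$ being closer to $Q$ than to $P$ is precisely the statement that the largest disk centered at $\gamma(t)$ missing $P$ already intersects $Q$.

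Suppose, toward a contradiction, that $\gamma_{Q}$ is disconnected: pick parameters $t_{1} < t_{2} < t_{3}$ with $\gamma(t_{1}), \gamma(t_{3}) \in \gamma_{Q}$ but $\gamma(t_{2}) \notin \gamma_{Q}$. Choose $q_{1} \in Q \cap B(t_{1})^{\circ}$ and $q_{3} \in Q \cap B(t_{3})^{\circ}$; since $Q$ is connected there exists an arc $\pi \subset Q$ from $q_{1}$ to $q_{3}$, and because $Q \cap B(t_{2})^{\circ} = \emptyset$, the arc $\pi$ is disjoint from $B(t_{2})^{\circ}$. I would then construct a closed curve $L$ in $\R^{2} \setminus P$ by concatenating four pieces: the segment from $\gamma(t_{1})$ to $q_{1}$ inside $B(t_{1})$, the arc $\pi$ inside $Q$, the segment from $q_{3}$ to $\gamma(t_{3})$ inside $B(t_{3})$, and the sub-path of $\gamma$ from $\gamma(t_{3})$ back to $\gamma(t_{1})$ passing through $\gamma(t_{2})$. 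By construction $L$ avoids $P$, since the disks' interiors miss $P$, since $\gamma \subset \MA(P) \subset \R^{2} \setminus P$, and since $Q$ and $P$ are disjoint sites.

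The contradiction would come from a Jordan-curve separation argument: the feature of $P$ tangent to $B(t_{2})$ on $\partial B(t_{2})$ should lie on one side of $L$ while the features of $P$ tangent to $B(t_{1})$ or $B(t_{3})$ lie on the opposite side, so the connected polygonal site $P$ would have to meet $L$, contradicting $L \cap P = \emptyset$. The principal obstacle is that $L$ as constructed is not automatically a simple closed curve---its four pieces may cross each other, and the $\gamma$-piece may even re-enter the disks $B(t_{1})$ or $B(t_{3})$---so one must either choose $t_{1}, t_{2}, t_{3}$ minimally to control such crossings, or extract a Jordan sub-curve of $L$ and verify carefully which features of $P$ lie in each complementary component. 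Handling this topological bookkeeping is the heart of the proof; an alternative route would be to analyze the continuous function $g(t) := r(t) - \DIST_{Q}(\gamma(t))$ arc-by-arc along $\gamma$ (where both the closest feature of $P$ and of $Q$ are locally fixed, so that $g$ is essentially algebraic of small degree), and to bound its total number of sign changes, but assembling such a global bound across the medial-axis vertices that $\gamma$ visits is itself nontrivial.
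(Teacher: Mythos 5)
Your high-level strategy is aligned with the paper's --- argue by contradiction, exploit the connectedness of $Q$, and derive a topological impossibility --- but the mechanism you propose for the separation argument is genuinely different from the paper's, and the step you yourself flag as the ``principal obstacle'' is not a technicality: it is the missing core of the proof. You construct a closed curve $L$ through $B(t_1)$, $Q$, $B(t_3)$, and a $\gamma$-subpath, and assert that $L$ should separate the feature of $P$ touching $B(t_2)$ from the features touching $B(t_1)$ or $B(t_3)$. But you never explain why this separation should hold, and it is not at all clear that it does: $L$ passes through the \emph{center} $\gamma(t_2)$ of $B(t_2)$, not around it, so it is not evident that the tangency points of $B(t_2)$ and $B(t_1)$ end up in different complementary components of $L$. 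On top of that, $L$ may not be simple, the $\gamma$-piece may re-enter the disks, and a maximal path in $\MA(P)$ can cross between several connected components of $\R^2\setminus P$, so there is no single ambient annulus in which the Jordan-curve argument obviously lives. You are aware of all of this, but awareness is not a proof.

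The paper avoids the closed-curve construction entirely and gets the separation for free from the medial-axis structure. It fixes the middle point $y=\gamma(t_2)$, lets $\Omega$ be the connected component of $\R^2\setminus P$ containing $y$, and uses the maximal empty disk $D_P(y)$ itself as the separator: since $y$ is a non-spoke medial-axis point, $D_P(y)$ touches $P$ in $k\ge 2$ points and cuts $\Omega$ into $D_P(y)$ and $k$ pieces $A_1,\dots,A_k$, one per local branch of $\MA(P)$ at $y$. A short disk-intersection argument (two circles meet in at most two points) shows that every maximal empty disk $D_P(p)$, for $p$ in a fixed branch, is confined to $A_i\cup D_P(y)$ for a single $i$. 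Since $x$ and $z$ lie in different branches, $D_P(x)\subset A_1\cup D_P(y)$ and $D_P(z)\subset A_2\cup D_P(y)$; both meet $Q$, $Q$ is connected and misses $P$, so $Q$ must cross $D_P(y)$, contradicting $y\notin\gamma_Q$. The paper also treats the spoke case separately and trivially ($D_P(x)\subset D_P(y)$). This route sidesteps all the bookkeeping you worried about, because the disk $D_P(y)$ is already a simple closed curve in the right component, and the separation of $x$'s disk from $z$'s disk follows from medial-axis structure rather than from a constructed curve $L$. Your alternative sketch (bounding sign changes of $r(t)-\DIST_Q(\gamma(t))$ arc by arc) is also not pursued far enough to judge, and would face the same difficulty of assembling a global bound across medial-axis vertices that you identify.
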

\begin{proof}
We can assume $\gamma$ to be a maximal path in $\TREE$, connecting a corner $w$
of $P$ with another corner~$u$ or a medial-axis vertex at infinity. Assume for
a contradiction that there are points $x$, $y$, $z$ on $\gamma$ in this order
such that $x, z \in \gamma_{Q}$, but $y\not\in \gamma_{Q}$.


We first consider the case where $y$ lies on a spoke of $\MA(P)$ induced by a
corner $c$ of $P$ (and one of its incident edges). The  spoke is incident to a
leaf of $\MA(P)$ located at $c$ and, without loss of generality, $x$ lies on
this spoke between $c$ and $y$. Since $y\not\in \gamma_{Q}$, the disk $D_P(y)$
centered at $y$ and touching $c$ does not intersect $Q$. The disk $D_P(x)$ is
included in $D_P(y)$ and thus, it  does not intersect $Q$ either, which
contradicts our hypothesis. 

\newcommand{\subspace}{\ensuremath{{\Omega}}\xspace}

\referee{Referee \#1: Proof of Lemma 1, last paragraph: 'The disk $D_p(x)$ touches $P$ on the boundary
of A...' that depends on your choice of labels A,x, etc. 'Hence, since Q is connected, Q
intersects...' The important part here is that you assume P and Q to be disjoint, not that Q is
connected.}
\referee{Referee \#2: You need to argue what happens if both x and z are contained in A cup $D_P(y)$
but not in B.}
\respond{We have rewritten the end of the proof to clarify the proof and these issues.}

We now consider the case where $y$ does not lie on a spoke of $\MA(P)$. 
Let \subspace be the connected component of $\R^{2}\setminus P$ that contains $y$. Since $y$
does not lie on a spoke of $\MA(P)$, the disk $D_{P}(y)$ touches $P$ in $k\geq 2$ distinct points
($k\leq 3$ by the general position assumption), and thus
$D_{P}(y)$ partitions \subspace into $k+1$ connected components: $D_{P}(y)$ and $k$ other components
denoted $A_1,\ldots, A_k$. Since $P$ is a polygon, the structure of its medial axis is well understood. 
In particular, in the neighborhood of $y$, $\MA(P)$ consists of $k$
arcs (straight or parabolic), and for every point $p$ on any single one of these arcs, $D_{P}(p)$
intersects one and the same  components $A_i$, and is contained in $A_i\cup D_{P}(y)$.
Also, since $\MA(P)$ consists of $k$ arcs in the neighborhood of $y$,   point $y$ splits the medial axis tree $\TREE$ that contains $\gamma$ into $k$
subtrees  $\TREE_1,\ldots,\TREE_k$.

Now, we observe that any open disk $D$, that does not intersect $P$, cannot
contain points in two distinct  components $A_i$ and $A_j$. Indeed, the boundary of $D$
would have to intersect the boundary of $D_{P}(y)$ in at least four points, implying that the two
disk coincide, and thus that $D$ intersects neither $A_i$ nor $A_j$ (since $D$ is open).

For any $p\in \TREE_i$, distinct from $y$, $D_{P}(p)$ is  included in $\subspace$
but  not in $D_{P}(y)$. Thus, the  interior of $D_{P}(p)$  intersects $\bigcup_j A_j$ but not $P$. Hence, it 
intersects only one of the  $A_j$. Furthermore, by continuity, the interior of $D_{P}(p)$
intersects the same $A_j$ for all $p$ in $ \TREE_i$. We can thus assume without loss of
generality that, for all $p$ in $\TREE_i$, the interior of $D_{P}(p)$
intersects $A_i$ and none of the other $A_j$. Since $D_{P}(p)$ lies in
$\subspace$, it  also follows that, for all $p$ in $\TREE_i$, $D_{P}(p)$ lies in $A_i\cup D_{P}(y)$.

Now, $x$ and $z$ belong to two distinct subtrees, say $\TREE_1$ and $\TREE_2$, respectively.
Thus, $D_P(x)$ lies in $A_1\cup D_{P}(y)$ and $D_P(z)$ lies in $A_2\cup D_{P}(y)$.
By assumption, both $D_{P}(x)$ and $D_{P}(z)$ intersect $Q$, thus $Q$ intersects
  both $A_1\cup D_{P}(y)$ and $A_2\cup D_{P}(y)$. Hence, since $Q$  is connected and does not intersect $P$,
  $Q$ must intersect $D_{P}(y)$, which is a contradiction and concludes the proof.
%
%
%
%
\end{proof}
\begin{lemma}
  \label{lem:tree}
  Let $\TREE$ be a tree of $\MA(P)$.  Then $\TREE \cap \REG(P)$ is a
  connected subtree of $\TREE$.
\end{lemma}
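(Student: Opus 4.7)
The approach is to derive Lemma~\ref{lem:tree} directly from Lemma~\ref{lem:path} by a proof by contradiction. I would assume that $\TREE \cap \REG(P)$ has at least two distinct connected components, pick a point $x$ in the relative interior of one component and a point $z$ in the relative interior of another, and consider the unique path $\gamma \subset \TREE$ joining $x$ to $z$. Since $x$ and $z$ are interior to $\REG(P)$, with $P$ the unique farthest site, we have $d(x,P)>d(x,Q')$ and $d(z,P)>d(z,Q')$ strictly for every site $Q' \neq P$.

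Because $x$ and $z$ lie in distinct components of $\TREE \cap \REG(P)$, the path $\gamma$ must exit $\REG(P)$: there is a point $y$ on $\gamma$ with $y \notin \REG(P)$. By the very definition of the farthest-site region, this yields a particular site $Q \in \SITES \setminus \{P\}$ with $d(y,Q) > d(y,P)$. I then apply Lemma~\ref{lem:path} to the path $\gamma$ (which lies in $\MA(P)$ since $\TREE \subseteq \MA(P)$) and this specific site $Q$; it gives that $\gamma_Q = \{p \in \gamma : d(p,Q) < d(p,P)\}$ is a connected subpath of $\gamma$.

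The contradiction is then a simple bookkeeping exercise: $y \notin \gamma_Q$ because $d(y,Q) > d(y,P)$, whereas the strict inequalities from the choice of $x$ and $z$ give $x, z \in \gamma_Q$. But a connected subset of $\gamma$ containing both its endpoints $x$ and $z$ must contain every point of $\gamma$ in between, including $y$---contradicting $y \notin \gamma_Q$ and completing the proof.

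The one delicate step, which I expect to be the main obstacle, is the choice of $x$ and $z$ in the \emph{relative interior} of their respective components. This is what upgrades the inequalities $d(x,Q) \leq d(x,P)$ and $d(z,Q) \leq d(z,P)$ (which hold on all of $\REG(P)$) to the strict form needed to conclude $x, z \in \gamma_Q$. On the boundary of $\REG(P)$, for instance at a mixed vertex, we could have $d(\cdot,P) = d(\cdot,Q)$, in which case $x$ or $z$ would merely lie on the boundary of $\gamma_Q$ and Lemma~\ref{lem:path} would no longer yield the required contradiction; pushing $x$ and $z$ slightly along $\TREE$ into their components eliminates this possibility. Everything else reduces to a direct invocation of Lemma~\ref{lem:path}.
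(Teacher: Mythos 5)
Your proof is correct and rests on the same engine as the paper's: apply Lemma~\ref{lem:path} to the path $\gamma$ in $\TREE$ and a site $Q$, so that the subpath $\gamma_Q$ of points strictly closer to $Q$ than to $P$ is connected, forcing any intermediate point between two points of $\REG(P)$ to lie in $\REG(P)$ as well. The contradiction-style packaging is only cosmetic, and the ``delicate step'' you flag about taking $x,z$ in the relative interior is a non-issue: by definition $\REG(P)$ comprises only the cells, medial-axis arcs, and medial-axis vertices belonging to $P$ (where $D(\cdot)$ touches $P$ alone and every other site cuts its interior), so $d(x,Q)<d(x,P)$ holds \emph{strictly} for every $x\in\REG(P)$ and every $Q\neq P$, while the pure arcs and mixed vertices where equality would occur are excluded from $\REG(P)$.
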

\referee{Referee \#1: Proof of Lemma 2: It took me a while to understand this one. Maybe a proof by
contradiction would be easier to grasp?}
\respond{We have not modified this proof as we believe that it is quite clear and that it would be
difficult to make it clearer.} 
\begin{proof}
  Pick two points $p, q \in \TREE \cap \REG(P)$, let $\gamma$ be the
  path on $\TREE$ from $p$ to $q$, and let $x$ be any point between
  $p$ and $q$ on~$\gamma$.  
We need to show that $x \in \REG(P)$,
  which is equivalent to showing that $x$ is closer to every site
  $Q\neq P$ 
  than to~$P$.  Let $Q$ be such a site.  Since $p, q
  \in \REG(P)$, $p$ and $q$ are closer to $Q$ than to~$P$. 
  By Lemma~\ref{lem:path}, this implies that $x$ is closer to  $Q$ than to~$P$.
\end{proof}
The lemma allows us to bound the number of mixed vertices of~$\FS$.
\begin{lemma}
  \label{lem:number-mixed}
  The number of mixed vertices of $\FS$ for a family of disjoint
  polygonal sites of total complexity~$n$ is $O(n)$.
\end{lemma}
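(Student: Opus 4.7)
The plan is to count mixed vertices one site at a time, exploiting Lemma~\ref{lem:tree}. At a mixed vertex $v$, the disk $D(v)$ touches two features of a unique site $P$ (and one feature of a second site $Q$), so $v$ lies on a medial axis arc of $\MA(P)$ but in the interior of a cell of the subdivision induced by $\MA(Q)$. I will therefore charge each mixed vertex to the uniquely determined site $P$ and bound, separately for each $P$, the number of mixed vertices lying on $\MA(P)$.

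Fix a polygon $P$ and a tree $\TREE$ of $\MA(P)$, and set $S := \TREE \cap \REG(P)$. By Lemma~\ref{lem:tree}, $S$ is a connected subtree of $\TREE$. A mixed vertex $v$ on $\TREE$ is precisely a point where, moving along $\TREE$, one transitions between $\REG(P)$ and its complement, i.e.\ a boundary point of $S$ in $\TREE$. The general position assumption---no disk touches four features---rules out $v$ coinciding with a vertex of $\TREE$, since such a vertex already involves three features of $P$; hence each mixed vertex lies strictly in the interior of an arc of $\TREE$, and each component of $\TREE \setminus S$ attaches to $S$ at exactly one such boundary point.

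It follows that the number of mixed vertices on $\TREE$ is at most the number of connected components of $\TREE \setminus S$. Each such component is a non-degenerate subtree of $\TREE$ and must therefore contain at least one leaf of $\TREE$ (a corner of $P$ or an infinite endpoint), so the count is bounded by the number of leaves of $\TREE$. Since every corner of $P$ induces a constant number of leaves in $\MA(P)$ and each unbounded pocket of $P$ contributes one leaf at infinity---while the number of pockets is $O(|P|)$---the total leaf count over all trees of $\MA(P)$ is $O(|P|)$. Summing over $P \in \SITES$ yields $O(n)$ mixed vertices in total. The most delicate point is the general-position step ruling out mixed vertices at medial axis vertices; with that in hand, the rest is a straightforward combinatorial count on trees.
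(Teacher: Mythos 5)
Your proof is correct and follows essentially the same route as the paper: both invoke Lemma~\ref{lem:tree} to conclude that $\TREE \cap \REG(P)$ is a connected subtree, then bound the mixed vertices on each tree of $\MA(P)$ by the complexity of that tree and sum over sites. Your version makes explicit the general-position step (that a mixed vertex cannot coincide with a degree-three vertex of $\TREE$) which the paper leaves implicit in the phrase that the mixed vertices are ``exactly the finite leaves'' of the subtree.
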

\begin{proof}
  Consider a site $P \in \SITES$ of complexity~$m$.  Its medial axis
  $\MA(P)$ has complexity~$O(m)$. By Lemma~\ref{lem:tree}, for
  each tree~$\TREE$ of~$\MA(P)$ the intersection $\TREE \cap \REG(P)$
  is a connected subtree.  Since the mixed vertices on~$\TREE$ are
  exactly the finite leaves of this subtree, this implies that the
  number of mixed vertices on $\MA(P)$ is~$O(m)$. Summing over all $P
  \in \SITES$ then proves that the number of mixed vertices of $\FS$
  is~$O(n)$.
\end{proof}
We next consider the vertices at infinity.
\begin{lemma}
  \label{lem:linear-at-infinity}
  The number of pure vertices at infinity of $\FS$ is at most~$2k-2$.
  The total number of vertices at infinity of $\FS$ is~$O(n)$.
\end{lemma}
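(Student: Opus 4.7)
I would bound the pure and medial-axis vertices at infinity separately.

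For the medial-axis vertices at infinity, each one is the endpoint of an unbounded arc of $\MA(P)$ for some site $P$. As described above, $\MA(P)$ contributes two isolated infinite spokes per edge of $P$ lying on $\conv(P)$, plus one infinite arc per unbounded pocket of $P$. Thus a site of complexity $m$ contributes $O(m)$ unbounded medial-axis arcs, and summing over all $k$ sites yields $O(n)$ medial-axis vertices at infinity.

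For the pure vertices at infinity, I would analyse the behaviour of $\FS$ at infinity. Writing $x = tu$ with $u \in S^{1}$ and $t \to \infty$, a direct expansion gives
\[
d(x, P) \;=\; t \;-\; h_{P}(u) \;+\; O(1/t),
\]
where $h_{P}(u) = \max_{p\in P}\langle u, p\rangle$ is the support function of $\conv(P)$. Hence the farthest site from $x$ (for $t$ large) is the one minimising $h_{P}(u)$, and the pure vertices at infinity are precisely the breakpoints of the lower envelope of $\{h_{P}\}_{P\in \SITES}$ on $S^{1}$. The key combinatorial claim is that, for any two distinct sites $P, Q$, the equation $h_{P}(u) = h_{Q}(u)$ has at most two solutions on $S^{1}$ in general position: any such coincidence means that $u$ is the outward normal of a common supporting line of $\conv(P)$ and $\conv(Q)$ touching each at its extreme point, i.e.\ an outer common tangent line, and two convex sets in the plane admit at most two outer common tangents generically (two when neither convex hull contains the other, and zero when one nests inside the other---in which case $h_{P}$ and $h_{Q}$ are strictly ordered on all of $S^{1}$). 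Given this pairwise pseudo-parabola property, the lower envelope of $k$ such functions on the cyclic domain $S^{1}$ has at most $2k-2$ breakpoints, by the standard cyclic Davenport--Schinzel bound for order-two sequences.

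Combining the two counts gives $(2k-2) + O(n) = O(n)$ vertices at infinity in total. The main obstacle will be a clean justification of the ``at most two outer common tangents'' claim: the sites $P, Q$ are disjoint by hypothesis, but their convex hulls $\conv(P)$ and $\conv(Q)$ may overlap or even be nested (because each $P$ is only a one-dimensional polygonal complex and can lie inside the convex hull of another), so the nested, properly overlapping, and fully disjoint cases have to be handled separately, using the general position assumption to exclude degenerate coincidences of tangency.
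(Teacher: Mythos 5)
Your approach is essentially the paper's: the pure vertices at infinity are the breakpoints of a lower envelope over directions, the pairwise bound is the heart of the matter, and the step from a pairwise bound of two to a global bound of $2k-2$ is a Davenport--Schinzel order-two argument on the cyclic sequence of Voronoi regions at infinity. The medial-axis count is also the same (linear total complexity of the $\MA(P)$). So structurally this is the intended proof.

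However, your key intermediate claim is stated incorrectly, and the error hides exactly the part that actually has content. You assert that \emph{any} two convex sets in the plane admit at most two outer common tangents when neither contains the other. That is false: take $\conv(P)$ to be the horizontal segment $[-1,1]\times\{0\}$ and $\conv(Q)$ the vertical segment $\{0\}\times[-1,1]$; then $h_P(\theta)=|\cos\theta|$ and $h_Q(\theta)=|\sin\theta|$ cross at $\theta=\pi/4,3\pi/4,5\pi/4,7\pi/4$, giving four outer common supporting lines. More generally, whenever $\conv(P)$ and $\conv(Q)$ properly overlap, $h_P-h_Q$ may change sign four times, so ``non-nested'' is nowhere near enough. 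What saves the argument is a property of $P$ and $Q$ themselves, not of their convex hulls: $P$ and $Q$ are disjoint \emph{and connected}. The paper phrases this as: a zero of $h_P-h_Q$ is a bridge edge of $\conv(P\cup Q)$ joining a corner of $P$ to a corner of $Q$, and $\conv(P\cup Q)$ can have at most two such bridges. The reason there cannot be four is topological: if the boundary of $\conv(P\cup Q)$ met $P$, then $Q$, then $P$, then $Q$ in cyclic order, a path inside $P$ joining its two boundary pieces and a path inside $Q$ joining its two would be forced to cross inside $\conv(P\cup Q)$, contradicting $P\cap Q=\emptyset$. You flagged this step as ``the main obstacle'' and correctly noted that the convex hulls can overlap or nest, but the case analysis you sketch (by hull containment) cannot resolve it; you need the Jordan-curve-style disjointness argument on $P$ and $Q$ directly. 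With that lemma supplied, the rest of your plan (cyclic Davenport--Schinzel, medial-axis count) goes through as in the paper.
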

\begin{proof}
  For two sites $P, Q \in \SITES$, consider the diagram~$\fpvd(\{P,
  Q\})$.  A pure vertex at infinity corresponds to an edge of $\conv(P
  \cup Q)$ supported by a corner of~$P$ and a corner of~$Q$.  But
  $\conv(P \cup Q)$ can have at most two such edges, since $P$ and $Q$
  are disjoint and both are connected, and so $\fpvd(\{P, Q\})$ has at
  most two pure vertices at infinity.

  Consider now again $\FS$, and let $\sigma(\SITES)$ denote the
  sequence of sites whose Voronoi regions appear at infinity in
  circular order, starting and ending at the same region. We claim
  that $\sigma(\SITES)$ is a Davenport-Schinzel sequence of order~2,
  and has therefore length at most $2k-1$~\cite{sa-dsstg-95}.  Indeed,
  $\sigma(\SITES)$ has by definition no two consecutive identical
  symbols.  Assume now that there are two sites $P$ and $Q$ such that
  the subsequence $PQPQ$ appears in $\sigma(\SITES)$. If we delete all
  other sites, then $\sigma(\{P, Q\})$ would still need to contain the
  subsequence $PQPQ$, and therefore $\fpvd(\{P,Q\})$ would contain at
  least three pure vertices at infinity, a contradiction to the
  observation above.

  It now suffices to observe that the pure vertices at infinity are
  exactly the transitions between consecutive Voronoi regions, and
  their number is at most~$2k-2$. 

  All remaining vertices at infinity are medial axis vertices. 
Since the total complexity of all medial axes is $O(n)$, the bound follows
\end{proof}

We proved so far that the number of mixed and medial axis vertices is  $O(n)$  and, furthermore,
that there are at most $2k-2$  pure vertices at infinity. 
It remains to  bound the other pure vertices, for which  we first
need to prove a few basic properties. 

We start by  discussing
 a monotonicity property of cells of~$\FS$.  Let $C$ be
a cell of $\FS$ belonging to feature $w$ of site~$P$.  For a point $x
\in C$, let $x\s$ be the point on $w$ closest to~$x$.  Let $f_{x}$ be
the 
directed line segment starting at~$x$ and extending in
direction~$\overrightarrow{x\s x}$ until 
it reaches 
 $\MA(P)$ (a semi-infinite segment if this does not happen).  
We call $f_{x}$ the
\emph{fiber} of~$x$.  We note that if $w$ is an edge, then all fibers
of $C$ are parallel, and normal to $w$; if $w$ is a corner then all
fibers are supported by lines through~$w$.
\referee{Referee \#1: 'We note that if w is an edge, then all fibers of C are parallel and normal to
w'. I did not see that all until I realized that the distance function for farthest VD is from x to
the closest point on each site P (as stated on page 3). It might be worth to remind the reader of
this fact?}
\respond{We do not quite see the process of mind of the referee here and we have not made any change.}
\begin{lemma}
  \label{lem:fiber}
  For any $x \in C$, the fiber $f_{x}$ lies entirely in $C$ (and
  therefore in $\REG(P)$).
\end{lemma}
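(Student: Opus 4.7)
The plan is to verify that for every interior point $y$ of the fiber $f_x$, both defining conditions of the cell $C$ are preserved: namely, that $w$ remains the unique closest feature of $P$ to $y$ (so $y$ lies in the medial-axis cell of $w$), and that $P$ remains the strictly farthest site from $y$ (so $y\in\REG(P)$). The first condition is essentially tautological from the construction: by definition, $f_x$ is extended only until it meets $\MA(P)$, so for every $y$ strictly before this endpoint the closest point of $P$ to $y$ is still the point $x\s\in w$, placing $y$ in the medial-axis cell of~$w$.

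The key geometric observation driving the second condition is the inclusion $D_P(x)\subseteq D_P(y)$. To see this, note that whether $w$ is an edge or a corner, by construction of the fiber the point $x$ lies on the segment from $x\s$ to $y$, so $\DIST_P(y)=|y-x\s|=|y-x|+|x-x\s|=|y-x|+\DIST_P(x)$. For any $p\in D_P(x)$ the triangle inequality then gives
\[
|p-y|\leq |p-x|+|x-y|\leq \DIST_P(x)+|x-y|=\DIST_P(y),
\]
so $p\in D_P(y)$, and the same computation with strict inequalities shows that the open disk of $D_P(x)$ sits inside the open disk of $D_P(y)$.

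With this inclusion in hand, the rest follows directly. Since $x\in C\subseteq\REG(P)$, every other site $Q\in\SITES\setminus\{P\}$ satisfies $\DIST_Q(x)<\DIST_P(x)$, meaning $Q$ meets the open disk of $D_P(x)$; by the inclusion, $Q$ also meets the open disk of $D_P(y)$, so $\DIST_Q(y)<\DIST_P(y)$. Hence $P$ is the unique farthest site from $y$, and combined with the first paragraph we get $y\in C$. Applying this argument to every interior point of $f_x$ (and, if $f_x$ is a closed segment, to its endpoint by continuity/closure of $C$ in $\REG(P)$) completes the proof.

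I expect the main obstacle is simply arguing cleanly the inclusion $D_P(x)\subseteq D_P(y)$ in both the edge and corner cases; once that is set up, the farthest-Voronoi argument reduces to a one-line monotonicity statement about the disks growing along the fiber.
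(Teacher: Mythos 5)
Your proof is correct and takes essentially the same route as the paper's: the key fact is that as $y$ slides outward along the fiber, the disk centered at $y$ through $x\s$ contains $D_P(x)$ and hence still meets every other site, so $y$ stays in $\REG(P)$ until reaching $\MA(P)$; you simply spell out the containment $D_P(x)\subseteq D_P(y)$ via the triangle inequality, which the paper asserts without computation.
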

\begin{proof}
  The disk $D(x)$ touches $P$ in $x\s$ only, and its interior
  intersects all other sites.  When we move a point $y$ from $x$ along
  $f_{x}$, the disk $D$ centered at $y$ through $x\s$ keeps containing
  $D(x)$, and it therefore still intersects all other sites.  This
  implies that $y \in C$ as long as $D$ does not intersect $P$ in
  another point.  This does not happen until we reach $\MA(P)$.
\end{proof}
An immediate consequence, which we will use for computing  Voronoi diagrams
(Section~\ref{sec:computing}),  is that cells are ``monotone'':
%
\begin{lemma}
  \label{lem:monotonicity2}
%
The boundary of a cell~$C$ of $\FS$ belonging to feature $w$ consists of two
chains monotone with respect to~$w$, that is, monotone in the direction of $w$
if $w$ is an edge, and rotationally monotone around $w$ if $w$ is  a corner.
The \emph{lower chain} is closer to the feature and consists of pure arcs only,
the \emph{upper chain} consists of medial axis arcs only.  
\end{lemma}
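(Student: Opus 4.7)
The plan is to show that $C$ is monotone with respect to $w$---every fiber line meets $C$ in a (possibly empty) connected interval---and then to identify the two endpoints of each such interval as points on the two chains.

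First I would establish a monotonicity property for the planar cell $C_w$ of $w$ within $\MA(P)$ (the set of points whose closest feature of $P$ is $w$): for any fiber line $\ell$ and any $y \in C_w \cap \ell$, every point $y'$ on $\ell$ strictly between $y$ and $w$ also lies in $C_w$. This is a short triangle inequality: since $y'$ lies on the fiber through $y$, $d(y',w) = d(y,w) - |yy'|$, and for any other feature $v$ of $P$,
\[
d(y', v) \geq d(y,v) - |yy'| > d(y,w) - |yy'| = d(y',w).
\]
In particular, $C_w \cap \ell$ is an interval reaching from $w$ (exclusive) up to the point where $\ell$ first meets~$\MA(P)$.

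Next I would combine this with Lemma~\ref{lem:fiber} to conclude that $C \cap \ell$ is connected. Given $x_1, x_2 \in C \cap \ell$ with $x_1$ closer to $w$, let $x_3 \in \MA(P)$ be the endpoint of the fiber $f_{x_1}$; by Lemma~\ref{lem:fiber}, the segment $[x_1, x_3]$ lies in $C$. If $x_3$ lay strictly between $x_1$ and $x_2$ on $\ell$, the monotonicity of $C_w$ applied to $x_2$ would force $x_3 \in C_w$, contradicting $x_3 \in \MA(P)$. Hence $x_2 \in [x_1, x_3] \subseteq C$ and the whole segment $[x_1,x_2]$ lies in $C$.

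Finally, the two-chain structure follows. The upper endpoint of $C \cap \ell$ is the point on $\MA(P)$ reached by the fiber from any interior point (by Lemma~\ref{lem:fiber}), hence lies on a medial axis arc. The lower endpoint cannot lie on $\MA(P)$ (since $C_w$ extends monotonically down to $w$ along $\ell$, so $\ell$ meets $\MA(P)$ only above) and cannot lie on $w$ itself (near $w$, $d(\cdot,P)\to 0$ while distances to the other disjoint sites stay bounded below, so $P$ is not the farthest site there), so it must lie on a pure arc where some other site ties $P$ as farthest. Collecting upper endpoints over all fibers yields the upper chain of medial axis arcs, and collecting lower endpoints yields the lower chain of pure arcs; each chain is monotone with respect to $w$ because each fiber meets it in at most one point. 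The main subtlety is handling the edge case (parallel perpendicular fibers) and the corner case (pencil of rays through $w$) in a unified way, and verifying the behavior at infinity when chain pieces extend unboundedly.
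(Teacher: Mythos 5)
Your proof is correct and follows essentially the same route as the paper's: use Lemma~\ref{lem:fiber} to argue that each fiber line $\ell$ meets $C$ in a connected interval, with the far endpoint on $\MA(P)$ (the terminus of a fiber) and the near endpoint on a pure arc. The one substantive addition is your explicit triangle-inequality argument that the medial-axis cell $C_w$ of the feature $w$ is star-shaped (resp.\ tube-shaped) along fiber lines, so that $\ell$ cannot re-enter $C_w$ after crossing $\MA(P)$; the paper compresses this into the assertion ``the feature of $P$ closest to any such $z$ cannot be $w$,'' and your version makes that step rigorous rather than asserted.
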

\referee{Referee \#1: Proof of Lemma 6: I am missing a statement to the fact the lower chain consists of pure arcs only.}
\referee{Referee \#2: Why is it not possible to re-enter cell C when walking beyond the first point
of the medial axis (couldn't the medial axis cross line l twice)? Or what do you mean by "monotone"?
Also, it would make more sense to call the chains "near" and "far" instead of "lower" and "upper".}
\respond{We partially rewrote this proof to address these relevant issues. We however kept the name
of lower and upper chain as it is a usual name and it carries the correct meaning.}
\begin{proof}
Let $P$ be the site containing the feature $w$.
Consider a half-line~$\ell$ with origin 
on~$w$, and normal to~$w$ if $w$ is an edge.
Let $x$ be the point closest to~$w$ in~$\ell\cap C$.
It is straightforward that  the
entire fiber $f_{x}$ lies in $C$, and no point $z$ on $\ell$ beyond the medial axis
can be in $C$ since the feature of $P$ closest to any such $z$ cannot be $w$.
Hence, the boundary of $C$ consists of two monotone chains with respect to~$w$. 
Moreover, the upper chain consists of medial axis arcs by definition of the fibers $f_{x}$.
The lower chain consists of pure arcs, because if a point $x$ on the lower chain was on the
medial axis, then the fiber $f_{x}$ would be reduced to point $x$, by definition; thus $x$ would
also be on the upper chain, implying that $x$ is an endpoint of the two chains.
%
%
\end{proof}

We now show (Lemma~\ref{lem:bounded-connected}) that if a Voronoi
region is bounded, then it is connected (we actually show that it is
simply connected, but will not use that fact in this paper). This
property is tight in the sense that, as shown in
Lemma~\ref{lem:unbounded-k-1}, a single Voronoi region may consist of
up to $k-1$ unbounded connected components; we postpone the proof of
this property to the end of the section.

\begin{lemma}
  \label{lem:contain}
  If a  connected component of the Voronoi region $\REG(P)$ of a site $P \in \SITES$
  is  bounded, then $P$ properly contains another site inside one   of its pockets.
\end{lemma}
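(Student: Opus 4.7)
The plan is to show in two steps that $C$ must lie in a single pocket $\pocket$ of $P$, and that $\pocket$ must contain another site.

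First, I would use Lemma~\ref{lem:fiber} to confine $C$ to a pocket. Pick any $x \in C$; by that lemma $f_x \subseteq C$, so $f_x$ is bounded. Suppose, toward a contradiction, that $x$ lies in the exterior of $\conv(P)$. Then $x\s$ sits on $\partial\conv(P)$, and the ray $f_x$, starting at $x$ in direction $\overrightarrow{x\s x}$, immediately leaves $\conv(P)$ and, by convexity, never re-enters it. Moreover, $f_x$ stays in the medial-axis cell of $x\s$'s feature, which in the exterior of $\conv(P)$ is either an unbounded strip (if $x\s$ lies on an edge of $\partial\conv(P)$) or an unbounded wedge (if $x\s$ is a corner), in either case extending to infinity in the direction of $f_x$. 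So $f_x$ would be semi-infinite, a contradiction. Thus $C \subseteq \conv(P)\setminus P$, and since $C$ is connected it lies in a single pocket $\pocket$ of $P$.

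Second, fix any $x \in C$ and consider the open disk $\mathrm{int}\,D(x)$: it is connected, disjoint from $P$, and contains $x$, so it lies in the connected component of $\R^2\setminus P$ containing $x$. When $\pocket$ is a bounded pocket, $\partial\pocket\subseteq P$ and that component equals $\pocket$, hence $\mathrm{int}\,D(x)\subseteq\pocket$. Every other site $Q$ meets $\mathrm{int}\,D(x)$ by definition of $D(x)$, so $Q$ has a point in $\pocket$; being connected and disjoint from $P=\partial\pocket$, each such $Q$ lies entirely in $\pocket$.

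The main obstacle is the case where $\pocket$ is an unbounded pocket: then $\partial\pocket$ also contains ``bridges'' of $\conv(P)$ not lying in $P$, through which $\mathrm{int}\,D(x)$ could escape $\pocket$ into the exterior of $\conv(P)$. To handle this, I would refine the choice of $x$ by picking $v\in\bar C$ that minimizes $\UDIST$. Using the monotonicity of $\UDIST$ along each arc of $\FS$ (arrows in Fig.~\ref{fig:vertices}) and the absence of local minima in cell interiors, $v$ must be a vertex at which all incident arcs of $\bar C$ have $\UDIST$ increasing away from $v$---necessarily an \emph{enclosed} pure vertex, where three features (one from $P$ and one from each of two other sites) lie on $\partial D(v)$ and form a triangle surrounding $v$. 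I expect this enclosure, combined with $v\in\pocket$, to pin the touching points of the two other sites inside $\pocket$, and then by connectedness of each site to force those sites entirely into $\pocket$.
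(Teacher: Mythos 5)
Step 1 has a genuine gap: the claim that $x\in C$ outside $\conv(P)$ forces $f_x$ to be semi-infinite is false, and so is the intended conclusion $C\subseteq\conv(P)\setminus P$. Take $P$ to be a wide, shallow U with tips at $(\pm10,1)$ and bottom edge on the $x$-axis, and one other site $Q$ near $(0,0.5)$. Then $\REG(P)$ is bounded but contains $(0,2)$, which lies outside $\conv(P)=[-10,10]\times[0,1]$; the fiber of $(0,2)$ points straight up, with $x\s=(0,0)$ on $\partial\conv(P)$, and it terminates on $\MA(P)$ at $(0,50.5)$, where the nearest feature of $P$ switches from the bottom edge to the two tips. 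The flaw is the assumption that $f_x$ lies in the unbounded exterior strip or wedge of $x\s$'s feature: here $x$ sits on the \emph{pocket} side of that feature, and on that side the cell is bounded by the pocket's medial-axis tree, which extends across the bridge into the exterior of $\conv(P)$. With Step 1 gone, Step 2 (which is correct but presupposes $C\subseteq\pocket$ with $\partial\pocket\subseteq P$) no longer applies, and Step 3 is only a sketch; moreover its opening move---that the $\UDIST$-minimizer on $\bar C$ is a vertex because arcs have no interior local minima---is unjustified, since arcs of $\FS$ can have interior local minima of $\UDIST$ (only local \emph{maxima} are excluded, as the paper notes in the proof of Lemma~\ref{lem:mixed}).

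The paper does not try to confine $C$ to a pocket at all. It uses the fiber argument only to conclude that $C$ meets $\MA(P)$, picks $x\in C\cap\MA(P)$, and distinguishes according to the tree of $\MA(P)$ containing $x$: if the associated pocket is a hole, the disk argument of your Step 2 finishes; otherwise it moves $x$ along the tree toward its infinite arc, and at the finite exit point $x'$ from $C$ the disk $D_P(x')$ becomes tangent to another site $Q$, which is then trapped in the pocket. This works even when $x'$ lies outside $\conv(P)$, which is precisely the situation your Step 1 tries to rule out but cannot.
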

\begin{proof}
Let $C$ be a bounded connected component of $\REG(P)$. 
  We first observe that $C$ contains some points of the medial
  axis $\MA(P)$ of $P$. Indeed, let $x\in C$ and consider its
  fiber $f_x$. By Lemma~\ref{lem:fiber} and since $C$ is bounded, $f_x$ does not extend to
  infinity and, therefore, one of its endpoints  lies on $\MA(P)$.

  Let $x$ be a point in $C\cap\MA(P)$. If $x$ lies in a pocket
  of $P$ that does not share an edge with $\conv(P)$ (that is, the
  pocket is a hole in $P$), then this pocket does contain all the
  other sites in $\SITES\setminus\{P\}$ and the lemma is proven.
  Otherwise, we let $x$ move along the medial axis $\MA(P)$ up to
  infinity. At some point $x'$, the point must exit from the bounded region $C$. This
  means that $D_\SITES(x')=D_P(x')$ is tangent to another site $Q$
\referee{Referee \#2: Instead of "...site Q without intersecting it properly." one could say
"...site Q, but does no longer intersect it properly."}
\respond{Done.}
  and does no longer intersect it properly. It then follows from the fact that $x'$ lies on the medial axis
  of $P$ that 
the   site~$Q$ lies entirely in the pocket of $P$ associated with the tree of $\MA(P)$ containing
$x$ and $x'$. 
\end{proof}

\begin{lemma}
  \label{lem:bounded-connected}
If a connected component of the Voronoi region $\REG(P)$ of a site $P \in \SITES$ is bounded, then $\REG(P)$ is simply
  connected.
\end{lemma}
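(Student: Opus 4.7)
By Lemma~\ref{lem:contain}, the hypothesis of a bounded component of $\REG(P)$ yields a site $Q$ contained in a pocket $\pocket$ of $P$. Let $\TREE$ denote the tree of $\MA(P)$ inside $\pocket$. My plan is to exhibit a strong deformation retraction of $\REG(P)$ onto the subtree $\TREE' := \TREE \cap \REG(P)$, which is connected by Lemma~\ref{lem:tree} and hence contractible; this would immediately yield that $\REG(P)$ is simply connected.

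The first step is to localize $\REG(P)$ to the closure of the single pocket $\pocket$: $\REG(P) \subseteq \overline{\pocket}$. When $\pocket$ is a bounded pocket (a hole of $P$), this follows from a crossing argument: for $x \notin \overline{\pocket}$ and any $q \in Q$, the segment $[x,q]$ must meet $P$ at some point $p$, yielding $d(x,Q) \geq |x-p| + |p-q| \geq d(x,P) + d(P,Q) > d(x,P)$, and hence $x \notin \REG(P)$. For an unbounded pocket, $Q$ may be reached from $x$ without the segment $[x,q]$ crossing $P$, so the crossing argument breaks down; in that case a more delicate analysis of the $P$-$Q$ bisector and of the geometry of $\conv(P)$ is needed to rule out 2-dimensional components of $\REG(P)$ outside $\overline{\pocket}$.

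Once localization is established, the second step applies Lemma~\ref{lem:fiber}. For each $x \in \REG(P)$, the fiber $f_x$ lies entirely in $\REG(P)$ and ends on $\MA(P)$; since $\REG(P) \subseteq \overline{\pocket}$ and the only part of $\MA(P)$ inside $\pocket$ is $\TREE$, this endpoint lies on $\TREE \cap \REG(P) = \TREE'$. Parameterizing each fiber linearly defines a continuous homotopy $H \colon \REG(P) \times [0,1] \to \REG(P)$ with $H(x,0)=x$, $H(x,1)$ equal to the endpoint of $f_x$ on $\TREE'$, and $H(y,t)=y$ for all $y \in \TREE'$. This exhibits a strong deformation retraction of $\REG(P)$ onto $\TREE'$, and since $\TREE'$ is a tree, $\REG(P)$ inherits its simple connectedness.

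The main obstacle I anticipate is the localization step for the unbounded-pocket case, where the straightforward segment-crossing argument does not apply. I would address it either by showing that a bounded component of $\REG(P)$ in fact forces $Q$ to sit in a bounded pocket of $P$, or by analyzing the global structure of the bisector $\{x : d(x,P) = d(x,Q)\}$ when $Q$ lies in an unbounded pocket and verifying that $\{x : d(x,P) \geq d(x,Q)\}$ is confined to $\overline{\pocket}$.
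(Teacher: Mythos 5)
Your overall strategy matches the paper's: both of you invoke Lemma~\ref{lem:contain} to get a site $Q$ inside a pocket $\pocket$ of $P$, then aim to deformation-retract $\REG(P)$ along fibers onto the connected subtree $\TREE\cap\REG(P)$ (Lemma~\ref{lem:tree}), whose contractibility gives simple connectedness. The structure of the retraction argument is the same as the paper's.

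The gap is your localization step, $\REG(P)\subseteq\overline{\pocket}$, which you yourself flag as problematic when $\pocket$ is unbounded. This is a real difficulty, not a technicality you can talk around: Lemma~\ref{lem:contain} does \emph{not} force $Q$ into a bounded pocket, so your first fallback fails. A site $Q$ can sit in an unbounded pocket (think of $P$ a ``C''-shape with a narrow gap on the convex hull and $Q$ nested inside the opening); the segment from a point $x$ to $Q$ can then slip through the gap without ever crossing $P$, and the inequality $d(x,Q)\geq d(x,P)+d(P,Q)$ simply does not hold. Your second fallback (a direct analysis of the $P$--$Q$ bisector) is the kind of ad hoc geometric case analysis the paper's argument is designed to avoid, and you do not carry it out.

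The paper's proof sidesteps localization of $\REG(P)$ entirely. It never claims $\REG(P)\subseteq\overline{\pocket}$. Instead, for $x\in\REG(P)$ it observes that $D(x)$ must properly intersect $Q$ (all other sites must cross into its interior), hence must properly intersect the pocket $\pocket$; and as the disk grows while $y$ traverses the fiber $f_x$, it always contains $D(x)$, hence always meets $\pocket$. This immediately forces the fiber to terminate (at a finite $x'$, because the disk cannot grow unboundedly while intersecting $\pocket$ without being blocked by $P$) and forces $x'$ onto the tree $\TREE$ associated with $\pocket$, with no assumption on whether $\pocket$ is bounded or unbounded. You would close your gap by replacing the set-inclusion $\REG(P)\subseteq\overline{\pocket}$ with this weaker but sufficient dynamic claim about where each fiber terminates.
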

\begin{proof}
By Lemma~\ref{lem:contain}, $P$ contains another site $Q$ in one of its pockets $\mathfrak{P}$. 
Let $\TREE$ be the tree of $\MA(P)$ that corresponds to $\mathfrak{P}$.


Let $x$ be any point in $\REG(P)$. The disk $D(x)$ touches $P$ in a
point $x\s$, and its interior intersects all other sites,
including~$Q$. Hence, $D(x)$ properly intersects the
pocket~$\mathfrak{P}$. It follows that when moving a point $y$ from
$x$ in the direction of $\overrightarrow{x\s x}$, the disk centered at
$y$ through $x\s$ keeps containing $D(x)$, and it therefore keeps
intersecting $\mathfrak{P}$.  
\referee{Referee \#2: What if fiber $f_x$ is unbounded so that no finite $y=x'$ exists?}
\respond{This case is not possible. We emphasized the fact that the point $y$ is necessarily finite.}
This implies that, at some finite point $y=x'$,
the disk $D$ centered at $x'$ and tangent to $P$ at $x\s$ becomes
tangent to $P$ at some other point, hence $x'$ lies on
$\MA(P)$. Moreover, $x'$ lies on the tree $\TREE$ because the disk $D$
properly intersects $\mathfrak{P}$.

Hence, for any point $x$ in $\REG(P)$, the fiber $f_x$ is a segment joining $x$ to a point $x'$ on
$\TREE$. Furthermore, the fiber $f_x$ lies in $\REG(P)$, by Lemma~\ref{lem:fiber}, and $\TREE \cap \REG(P)$
is a connected tree, by Lemma~\ref{lem:tree}.
Therefore, $\REG(P)$ is connected. 

It remains to show that $\REG(P)$ is simply connected.
We have shown that for any $x\in\REG(P)$, the
fiber $f_x$  is  a segment contained in $\REG(P)$, and 
connecting $x$ to $x'\in\TREE\cap \REG(P)$. 
By moving the points of $\REG(P)$ along their fiber, we can design, as follows, a continuous
deformation retraction of $\REG(P)$ onto the tree $\TREE\cap\REG(P)$ which implies that $\REG(P)$
is simply connected.

More precisely, it easy to check that the map $F:\REG(P)\times[0,1]\rightarrow\mathbb{R}^2$, $(x,t)\mapsto (1-t)\,x+t\,x'$
is continuous. Furthermore, $F$ is a (strong) deformation retraction since, for all $x\in\REG(P)$,
$t\in[0,1]$ and  $m\in\TREE\cap \REG(P)$, we have $F(x,0)=x$, $F(x,1)\in\TREE\cap \REG(P)$, and $F(m,t)=m$.
We have thus exhibited a deformation retraction of $\REG(P)$ onto $\TREE\cap\REG(P)$, which implies that these two point sets have the same fundamental group~\cite[Proposition 1.17]{hatcher}. Since $\TREE\cap\REG(P)$ is a tree, 
its  fundamental group is trivial and so is the fundamental group of $\REG(P)$. Any pair of points $x$
 and $y$ in $\REG(P)$ can be connected with a path made of their fibers $f_x$ and $f_y$ and a path
 in $\TREE\cap\REG(P)$, thus $\REG(P)$ is path connected. Together with having a trivial fundamental
 group, this property makes $\REG(P)$ simply connected~\cite[p.~28]{hatcher}.
\end{proof}

We can now conclude our analysis of the complexity of farthest-polygon Voronoi diagrams.

\begin{theorem}
  \label{thm:size}
%
  The farthest-polygon Voronoi diagram of a family of $k$~disjoint
  polygonal sites of total complexity~$n$ has $O(k)$ pure vertices and total
  complexity~$O(n)$. It consists of at most~$2k-2$
  Voronoi components and this bound is tight in the worst case.
\end{theorem}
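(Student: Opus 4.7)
The plan is to dispatch each type of vertex separately, using the already established lemmas for mixed, medial-axis, and at-infinity vertices, and an Euler-formula argument for the remaining pure vertices. Lemma~\ref{lem:number-mixed} bounds mixed vertices by $O(n)$; the total medial axis complexity bounds medial-axis vertices by $O(n)$; and Lemma~\ref{lem:linear-at-infinity} bounds vertices at infinity by $O(n)$, with at most $2k-2$ of them pure. Thus only the finite pure vertices remain to be counted.

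I would bound these by forming the planar subdivision $\mathcal{G}$ obtained from $\FS$ by erasing every medial-axis arc and then suppressing each resulting degree-two mixed vertex (merging its two incident pure arcs into a single edge). Because medial-axis arcs separate only cells belonging to the same Voronoi region, the faces of $\mathcal{G}$ coincide exactly with the Voronoi components of $\FS$. In $\mathcal{G}$ every finite pure vertex has degree $3$ and every pure vertex at infinity has degree $1$, so the edge count satisfies $2|E(\mathcal{G})| = 3 V_p + V_{\infty}^{(p)}$. Closing $\mathcal{G}$ up on a compact surface and applying Euler's formula together with the bound (proved separately below) that $\mathcal{G}$ has at most $2k-2$ faces, I obtain $V_p = O(k)$. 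Given $V_p = O(k)$ and the other vertex-type bounds, Euler's formula applied to $\FS$ itself, combined with the degree-$3$ property of internal vertices, yields an edge count and face count of $O(n)$, so the total complexity of $\FS$ is $O(n)$.

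For the $2k-2$ bound on Voronoi components, let $k' \le k$ denote the number of sites whose Voronoi regions appear at infinity. Revisiting the Davenport-Schinzel argument of Lemma~\ref{lem:linear-at-infinity} applied only to these $k'$ sites gives at most $2k'-2$ unbounded components (with the special case $k' = 1$ contributing a single unbounded component). By Lemma~\ref{lem:bounded-connected}, every bounded Voronoi component coincides with an entire simply connected $\REG(P)$, which contains no arc reaching infinity; hence a site with a bounded component cannot appear at infinity, and the number of bounded components is at most $k - k'$. Summing gives at most $k + k' - 2 \le 2k - 2$ Voronoi components in total.

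Tightness is obtained by combining one site realizing $k-1$ unbounded components, as in Lemma~\ref{lem:unbounded-k-1}, with $k-1$ other sites each contributing a single unbounded component, yielding exactly $2k-2$ components. The delicate step is the Euler-formula argument for $V_p$: it requires defining $\mathcal{G}$ so that its faces coincide with Voronoi components, verifying the degree structure, and properly handling the compactification of the unbounded arcs. Crucially, the component bound used as input to the Euler computation is proved purely from Lemmas~\ref{lem:linear-at-infinity} and~\ref{lem:bounded-connected}, so no circularity arises.
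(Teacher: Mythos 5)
Your proposal is correct and follows essentially the same route as the paper: bound medial-axis and mixed vertices directly, parameterize by the split between sites with bounded regions and sites appearing at infinity (you use $k'$ sites-at-infinity where the paper uses $k_1$ bounded components, but the two are complementary and give the identical $2k-2$ bound via Lemma~\ref{lem:bounded-connected} and Lemma~\ref{lem:linear-at-infinity}), then apply Euler's formula to the pure-arc graph whose faces are the Voronoi components to get $O(k)$ pure vertices, and invoke the Lemma~\ref{lem:unbounded-k-1} construction for tightness. Your version merely spells out the compactification and degree bookkeeping a bit more explicitly than the paper does.
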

\begin{proof}
The farthest-polygon Voronoi diagram contains three different kinds of
vertices.  The number of medial axis vertices is clearly only $O(n)$,
since the total complexity of all $\MA(P)$ for $P \in \SITES$ is
only~$O(n)$.  In Lemma~\ref{lem:number-mixed}, we showed that the
number of mixed vertices is also only~$O(n)$.  It remains to bound the
number of pure vertices of~$\FS$.

Let $k_1$ be the number of bounded Voronoi components. By Lemma~\ref{lem:bounded-connected}, each of
these components corresponds to a different site and only the remaining $k-k_1$ sites can
contribute to form vertices at infinity. By the proof of 
Lemma~\ref{lem:linear-at-infinity}, there are at most $2(k-k_1)-2$ pure
vertices at infinity, and therefore at most $2(k - k_1) - 2$ unbounded Voronoi
components.  It follows that the total number of Voronoi components is at most
$2k - k_1 - 2  \leq 2k-2$. Moreover, the construction of Lemma~\ref{lem:unbounded-k-1} shows that this bound is tight.


Let us now consider the graph~$G$ formed by the pure arcs and pure
vertices of~$\FS$.  Mixed vertices appear as vertices of degree two
in~$G$ (see Fig.~\ref{fig:vertices}), medial axis vertices do not
appear at all.  The faces of~$G$ are exactly the Voronoi components.
Since $G$ has at most $2k-2$~faces, Euler's formula implies that $G$
has $O(k)$ vertices of degree three.
\end{proof}

Finally, we prove that, as mentioned above, a single
Voronoi region of~$\FS$ can have up to $k-1$ connected components.
%

\begin{figure*}
  \centerline{\includegraphics{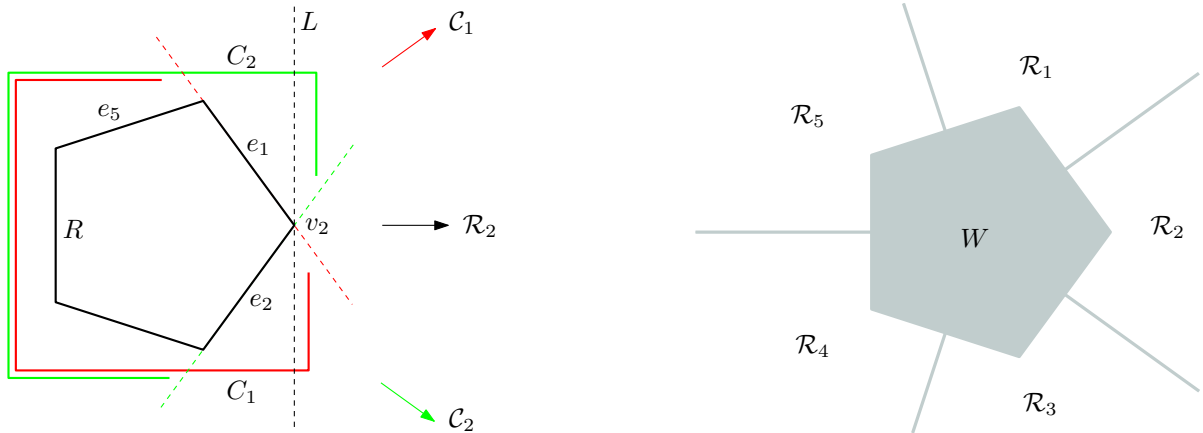}}
  \caption{The Voronoi region of polygon $R$
has $k-1$     connected components.}
  \label{fig:n_cells}
\end{figure*}

\referee{Referee \#2: It would be helpful to add line L and labels $v_2$, L to the left part of Figure
4.}
\respond{Done}

\begin{lemma}
  \label{lem:unbounded-k-1}
  A single Voronoi region of~$\FS$ can have $k-1$ connected components and this bound is tight. 
\end{lemma}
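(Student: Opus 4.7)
I would prove the statement in two parts: first the upper bound that any Voronoi region has at most $k-1$ connected components, and then a construction showing this bound is tight. For the upper bound, the plan is to combine Lemma~\ref{lem:bounded-connected} with a circular Davenport--Schinzel argument in the spirit of Lemma~\ref{lem:linear-at-infinity}. By Lemma~\ref{lem:bounded-connected}, if any component of $\REG(P)$ is bounded then $\REG(P)$ is connected and has a single component, so I may assume every component of $\REG(P)$ is unbounded. Each such component then contributes at least one occurrence of $P$ in the circular sequence $\sigma(\SITES)$ at infinity, and it suffices to prove that a single symbol cannot occur more than $k-1$ times in $\sigma(\SITES)$. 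Between two cyclically consecutive occurrences of $P$ there must appear at least one separator $Q\neq P$; any such $Q$ can appear in only one $P$-gap, since otherwise the forbidden subsequence $PQPQ$ would arise. With only $k-1$ symbols distinct from $P$ available, there are at most $k-1$ gaps and therefore at most $k-1$ occurrences of $P$.

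For tightness, I would exhibit a family of $k$ pairwise-disjoint polygonal sites $R, P_1, \ldots, P_{k-1}$ for which $\REG(R)$ has $k-1$ unbounded components (as depicted in Figure~\ref{fig:n_cells}). The goal is to force the circular sequence at infinity to take the alternating form $R\,P_1\,R\,P_2\,\cdots\,R\,P_{k-1}$. This is arranged by choosing $R$ so that the support function of its convex hull has $k-1$ well-separated directional minima, and placing each $P_i$ so that it is the farthest site in exactly one of the arcs between two consecutive minima of $h_R$. The case $k=3$ already illustrates the principle: take $R$ to be a long horizontal segment along the line $y=0$ and let $P_1, P_2$ be short vertical segments placed far to the right and far to the left of $R$, respectively. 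Then $h_R(\theta)=L\,|\cos\theta|$ vanishes at $\theta=\pm\pi/2$, while $h_{P_1}$ and $h_{P_2}$ are strictly positive there, so $R$ is the farthest site in two disjoint arcs at the top and bottom of the plane; a direct distance computation shows that these arcs belong to two distinct components of $\REG(R)$, separated by a horizontal strip around $y=0$ in which $R$ is never the farthest. The general construction iterates this idea.

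The main obstacle is the construction step itself: one must design the geometry so that (a) all sites are pairwise disjoint and in general position, (b) the intended alternation of support functions at infinity is actually realized all the way around the circle of directions, and (c) the $k-1$ arcs of $R$ at infinity correspond to $k-1$ \emph{distinct} components of $\REG(R)$ rather than to a single multi-branched component. Point (c) is the most delicate, but it can be ensured by placing each $P_i$ close enough to $R$ so that $\REG(R)$ is forced to vanish in a bounded region separating the $k-1$ infinite branches, as happens explicitly in the $k=3$ example above.
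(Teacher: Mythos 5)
Your overall plan matches the paper's: the upper bound comes from Lemma~\ref{lem:bounded-connected} together with a Davenport--Schinzel-type argument on the circular sequence $\sigma(\SITES)$ at infinity, and the tight construction uses a central site $R$ whose support function has $k-1$ directional minima, with one additional site filling each gap --- which is exactly what the paper's $(k-1)$-regular polygon with surrounding chains $C_i$ realizes. Your upper-bound derivation is a touch more direct than the paper's: you bound the cyclic occurrences of $P$ in $\sigma(\SITES)$ by noting that each of the $m$ gaps between consecutive $P$'s must contain a separator symbol occurring in no other gap (else the forbidden $PQPQ$ appears), whereas the paper first caps $|\sigma(\SITES)|$ at $2k-1$ via the order-$2$ DS bound and then halves using that no two consecutive entries equal $P$. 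Both routes give $m\leq k-1$, and both then invoke Lemma~\ref{lem:bounded-connected} to dispose of the bounded case.

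The one place your sketch goes wrong is in arguing that the $k-1$ arcs of $R$ at infinity lie in \emph{distinct} components (your point~(c)). You say $\REG(R)$ can be ``forced to vanish in a bounded region separating the $k-1$ infinite branches,'' but a bounded set cannot separate unbounded subsets of the plane; that phrasing would leave open the possibility of the branches reconnecting far from $R$. What is needed --- and what the paper supplies --- is a family of \emph{unbounded} separators. The paper observes that for each direction $\phi_i$ at infinity belonging to $\REG(C_i)$, the entire ray $x_i\phi_i$ (from the nearest point $x_i\in R$ toward $\phi_i$) is disjoint from $\REG(R)$: if some point $x$ on it lay in $\REG(R)$, Lemma~\ref{lem:fiber} would force the whole fiber $f_x$, hence $\phi_i$ itself, into $\REG(R)$, contradicting $\phi_i\in\REG(C_i)$. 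The set $W=R\cup\bigcup_i x_i\phi_i$ then disconnects the plane into $k-1$ unbounded pieces, one per component of $\REG(R)$. Your $k=3$ example works precisely because the $x$-axis plays this unbounded-separator role; the fiber argument is the clean way to produce such separators for general $k$, and you should replace the ``bounded region'' heuristic with it.
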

\begin{proof}
  The construction is shown in Fig.~\ref{fig:n_cells}. It consists of
  one $(k-1)$-regular polygon $R$ and $k-1$ polygonal chains $C_1,
  \ldots, C_{k-1}$. Let $e_1, e_2, \ldots, e_{k-1}$ denote the edges
  of $R$ in circular order. We inductively construct the polygonal
  sites $C_i, i=1,2,\ldots,k-1$ as follows. For the supporting line $l$
  of $e_i$, let $l^{+}$ be the closed halfplane containing $R$ and
  $l^-$ be the other. Then, consider the intersection $C_i^*$ between
  $l^+$ and the four edges of a square that contains $R, C_1, \ldots,
  C_{i-1}$ inside. We define $C_i$ as the set of points of $C_i^*$
  whose distance to $l$ is larger than some fixed small
  $\varepsilon>0$.  $C_i$ has at most four edges. Note that $l^+$
  contains $C_i$ completely. Consider a ray from $e_i$ to infinity in
  $l^-$ which is orthogonal to $l$. Since $l^-$ intersects all sites
  but $C_i$, the endpoint at infinity of this ray lies in the region
  $\mathcal C_i=\REG(C_i)$. 
On the other hand, for a
  sufficiently small $\varepsilon$, there is a line $L$ passing
  through $v_i$ (the vertex incident to $e_{i-1}$ and $e_{i}$) such
  that we can define $L^+$ as an open halfplane containing
  $R\backslash \{v_i\}$ and $L^-$ as the other open halfplane
  intersecting all the other sites but $R$.  The  endpoint at infinity  of
  the ray from $v_i$ to infinity in $L^-$ which is orthogonal to $L$
  lies in a connected component of $\REG(R)$, which we call
  $\mathcal{R}_i$.


\referee{Referee \#1: 
Proof of Lemma 10, lines 51-56: I don't like the use of $\omega$ here, it is too close in appearance
to the $w$ you use to indicate a feature of P.}
\respond{Changed everywhere for $\phi$}

  Therefore at infinity 
$\mathcal{R}_{1}, \mathcal{C}_1, \mathcal{R}_2, \mathcal{C}_2,  \cdots , \mathcal{R}_{k-1}, \mathcal{C}_{k-1}$ 
appear in turn. Note
  that for a point $x$ in the region $\REG(R)$, its fiber $f_x$ is an
  infinite ray because $R$ is convex. For $i=1,2,\ldots,k-1$, consider
  the half-line $x_i\phi_i$ from a point at infinity
  $\phi_i\in\mathcal{C}_i$ to a point $x_i\in R$ closest to
  $\phi_i$. If $x \in x_i\phi_i$ lies in $\REG(R)$, then
  $f_x\subset\REG(R)$, which is impossible because
  $\phi_i\in\mathcal{C}_i$. Define $W=R\cup\bigcup_i{x_i\phi_i}$.
  We then have $W\cap\REG(R)=\emptyset$, and $\R^2\setminus W$
  consists of $k-1$ unbounded connected subsets of the plane.  The
  connected subset bounded by $x_i\phi_i$, $x_{i+1}\phi_{i+1}$ and
  $R$ contains $\mathcal{R}_i$ completely. It follows that
  $\mathcal{R}_i \neq \mathcal{R}_j$ when $i\neq j$.

It remains to show that  the bound of $k-1$ is tight. 
By Lemma~\ref{lem:linear-at-infinity}, there are at most $2k-2$ pure
vertices at infinity, and thus at most $2k - 2$ unbounded Voronoi
components. Hence, one single Voronoi region has at most $k-1$ unbounded Voronoi
components (since two neighboring components cannot correspond to the same site). 
This concludes the proof because, by
Lemma~\ref{lem:bounded-connected}, if a Voronoi region has a bounded component, then the entire region is connected.
\end{proof}

\section{Algorithm} 
\label{sec:computing}

The proof of Theorem~\ref{thm:size} suggests an algorithm for computing the
Voronoi diagram by sweeping the arcs of the graph $G$. This is roughly
equivalent to computing the surface $\UDIST$ by sweeping a horizontal plane
downwards, and maintaining the part of $\UDIST$ above this plane. This is
essentially the approach used by Aurenhammer~\etal~\cite{adk-flsvd-06} for the
computation of farthest-segment Voronoi diagrams. However, this does not seem
to work for our diagram because of the mixed vertices of 
\referee{Referee \#1: 'of type (f)' you refer to Figure 3 here?}
\respond{Yes. Clarified.}
type~(f) (see~Fig.~\ref{fig:vertices}), where
$\UDIST$ has a local maximum.
We instead offer a divide-and-conquer algorithm.   
\begin{theorem}
  \label{thm:main}
  The farthest-polygon Voronoi diagram $\FS$ of a family
  $\SITES$ of disjoint polygonal sites of total complexity $n$ can be
  computed in time $O(n \log^3 n)$.
\end{theorem}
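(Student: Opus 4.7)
The plan is to follow a standard divide-and-conquer template: split $\SITES$ into two subfamilies $\SITES_1$ and $\SITES_2$ of roughly equal total complexity, recursively compute $\FI = \fpvd(\SITES_{1})$ and $\FII = \fpvd(\SITES_{2})$, then merge them into $\FS$. Since Theorem~\ref{thm:size} guarantees that each partial diagram has linear complexity, the cost of the algorithm is dominated by the merging procedure, and the running time satisfies a recurrence of the form $T(n) = 2T(n/2) + M(n)$, where $M(n)$ is the merging cost. To obtain the claimed bound it suffices to show that $M(n) = O(n \log^2 n)$.

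The merging step rests on the observation that $\UDIST = \max(\UDIST_1, \UDIST_2)$, so $\FS$ is obtained by overlaying $\FI$ and $\FII$ and then keeping, at each point $x$, the label of the surface that is higher. The \emph{merge locus} $\Gamma = \{x \in \R^{2} : \UDIST_{1}(x) = \UDIST_{2}(x)\}$ is a collection of arcs (pieces of bisectors between a site of $\SITES_1$ and a site of $\SITES_2$) that separates regions where $\UDIST_{1}$ dominates from regions where $\UDIST_{2}$ dominates. Once $\Gamma$ is known, $\FS$ is assembled in linear time by clipping $\FI$ and $\FII$ against $\Gamma$ and gluing them. To trace $\Gamma$ I would preprocess both $\FI$ and $\FII$ into point-location structures (cost $O(n \log n)$ each) so that, starting from any known point on $\Gamma$, I can walk along $\Gamma$ arc by arc: at each traversed arc, a constant number of point-location queries in $\FI$ and $\FII$ determines the next event (a vertex of $\Gamma$, or a crossing of a medial-axis arc of one of the partial diagrams). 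Since $|\Gamma| = O(n)$, this tracing step costs $O(n \log n)$.

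The hard part is \emph{finding starting points} on $\Gamma$. Components of $\Gamma$ that reach infinity are easily seeded by merging the two cyclic sequences of regions at infinity of $\FI$ and $\FII$. However, as highlighted by Fig.~\ref{fig:closed-curve}(a) and by Lemma~\ref{lem:bounded-connected}, a component of $\Gamma$ may be a closed curve inside the plane, corresponding to a bounded Voronoi component of $\FS$ that is invisible in either partial diagram. This is the main obstacle: one must detect the existence and location of every such closed loop without being able to anchor the search at infinity. My plan is to locate, for each candidate pair of sites that can create a bounded component, an extremal point of the corresponding loop (for instance the point where $\UDIST_{1} - \UDIST_{2}$ attains a critical value along a well-chosen direction) by running Megiddo-style parametric search driven by the point-location structures on $\FI$ and $\FII$. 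Each invocation of parametric search has overhead $O(\log^{2} n)$, and a careful accounting over the $O(k) = O(n)$ potentially relevant pairs bounds the total seeding cost by $O(n \log^{2} n)$.

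Summing seeding ($O(n \log^{2} n)$), preprocessing ($O(n \log n)$), and tracing ($O(n \log n)$) yields $M(n) = O(n \log^{2} n)$. Plugging this into $T(n) = 2T(n/2) + O(n \log^{2} n)$ solves to $T(n) = O(n \log^{3} n)$, as claimed. The correctness of the merge reduces to verifying the two local facts used above: (i) that a component of $\Gamma$ failing to reach infinity is certified by a critical point that the parametric search can detect, and (ii) that the walk along $\Gamma$ guided by point-location queries visits each arc of the merged diagram exactly once.
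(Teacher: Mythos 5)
Your high-level template is exactly the paper's: split $\SITES$ into two subfamilies of roughly balanced total complexity, recurse, merge in $O(n\log^2 n)$, and conclude via the recurrence $T(n)=2T(n/2)+O(n\log^2 n)$. You also correctly identify that the merge reduces to computing the red/blue interface (what the paper calls the purple chains, your $\Gamma$), that unbounded components of $\Gamma$ can be seeded at infinity, and that the obstacle is bounded components of $\Gamma$ — closed loops enclosing Voronoi components invisible at infinity. So the skeleton is right.

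The genuine gap is in how you propose to find a point on a bounded loop. You say you would run parametric search for ``each candidate pair of sites that can create a bounded component,'' seeking ``an extremal point'' where ``$\UDIST_1-\UDIST_2$ attains a critical value along a well-chosen direction,'' and you assert there are ``$O(k)=O(n)$ potentially relevant pairs.'' None of this is concrete enough to carry the proof. First, you never say how to identify, \emph{before} knowing the merged diagram, which pairs of sites are ``candidates''; a priori there are $\Theta(k^2)$ pairs, and running parametric search over all of them breaks the $O(n\log^2 n)$ budget. Second, and more fundamentally, parametric search needs a one-dimensional search domain, and ``a well-chosen direction'' does not supply one: which direction, anchored where, and why would monotonicity hold along it? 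The paper's solution is to observe (Lemma~\ref{lem:mixed}) that $\UDIST$ restricted to a bounded purple chain attains its maximum at a \emph{mixed vertex}, which by definition lies on a medial-axis arc of some site $P$. This turns the search domain into the $O(n)$ arcs of the medial axes $\MA(P)$ — an explicit, linear-size collection of low-degree curves. The key structural fact that makes parametric search along such an arc correct is Lemma~\ref{lem:tree} (built on Lemma~\ref{lem:path}): $\TREE\cap\REG(P)$ is a connected subtree, so membership in $\REG(P)$ is an ``interval'' property along each arc, which is precisely the monotonicity that parametric search requires. There is also a nontrivial second case the paper handles (Section~\ref{sec:mixed}, Lemmas~\ref{lem:nc}--\ref{lem:queen}) where no vertex of $\TREE$ lies in $\REG(P)$ yet $\REG(P)$ intersects the interior of a single arc; the condition $G(p,q)$ is introduced exactly to certify the unique candidate arc in $O(\log n)$ per test. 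Your sketch does not address this case at all.

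A secondary imprecision: you claim tracing $\Gamma$ costs $O(n\log n)$ via ``a constant number of point-location queries'' per arc. The paper's tracing (Lemma~\ref{lem:purple-chains-algo}) instead performs a simultaneous monotone sweep of the two cells currently containing the chain, with a charging argument to bound the total work, because a single cell of $\FI$ or $\FII$ can be crossed by many purple arcs and naive re-querying would blow up. The monotonicity that saves the charging argument is again supplied by Lemma~\ref{lem:tracing} (each fiber meets the purple chains at most once). Your bound happens to match, but the justification is missing.

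So: right architecture and right recurrence, but the crucial idea — reduce ``seed every bounded loop'' to ``find the mixed vertices,'' and search for mixed vertices by parametric search along medial-axis arcs using the connectedness of $\TREE\cap\REG(P)$ — is not in your proposal, and without it the claimed $O(n\log^2 n)$ merge cost is unsupported.
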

\begin{proof}
  Let $\SITES =\{P_{1},\dots, P_{k}\}$, and let $n_{i}$ be the
  complexity of~$P_{i}$.  If $k = 1$, then $\FS$ is simply
  the medial axis $\MA(P_{1})$, which can be computed in time $O(n
  \log n)$~\cite{f-savd-87}.  Otherwise, we split $\SITES$ into two
  disjoint families $\SITES_{1}, \SITES_{2}$ as follows:
  \begin{di}
    \item If there is a site $P_{i}$ with complexity $n_{i} \geq n/2$,
    then $\SITES_{1} = \{P_{i}\}$ and $\SITES_{2} =
    \SITES\setminus\{P_{i}\}$.
    \item Otherwise there must be an index $j$ such that $n/4 \leq
    \sum_{i=1}^{j} n_{i} \leq 3n/4$.  We let $\SITES_{1} =
    \{P_{1},\dots, P_{j}\}$ and $\SITES_{2} = \{P_{j+1}, \dots,
    P_{k}\}$.
  \end{di}
  We recursively compute $\FI$ and $\fpvd(\SITES_{2})$.
  We show in the rest of this section (see Lemma~\ref{lem:merging}) that we can
  then merge these two diagrams to obtain $\FS$ in time $O(n\log^{2} n)$,
  proving the theorem. 
\end{proof}


We now discuss  the merging step.
We are given the farthest-polygon Voronoi diagrams $\fpvd(\SITES_1)$ and $\fpvd(\SITES_2)$ of  two families
$\SITES_1$ and $\SITES_2$ of pairwise disjoint polygonal sites, and let $\SITES = \SITES_{1} \cup
\SITES_{2}$.


Consider the diagram $\FS$ to be computed.  We color the Voronoi
regions of $\FS$ defined by sites in $\SITES_{1}$ red, and the Voronoi
regions defined by sites in $\SITES_{2}$ blue.  A pure arc of $\FS$ is red
if it separates two red regions, and blue if it separates two blue
regions.  The remaining pure arcs, which separate a red and a blue
region, are called~\emph{purple}.  A vertex of $\FS$ is purple if
it is incident to a purple arc.  We observe (see Fig.~\ref{fig:vertices}) that, by our general
position assumption, every  purple vertex not at infinity is incident to exactly
two purple arcs, and so the purple arcs form a collection of bounded and unbounded chains (see Fig.~\ref{fig:algo}). 

As we will see in Lemma~\ref{lem:merging}, merging $\FI$ and $\FII$ can be done in linear time once all purple
arcs are known, because the diagram $\FS$ consists of those portions
of $\FI$ lying in the red regions of $\FS$, and those portions of
$\FII$ lying in the blue regions of~$\FS$, see Fig.~\ref{fig:algo}.
\begin{figure}[t]
  \centering
  \includegraphics{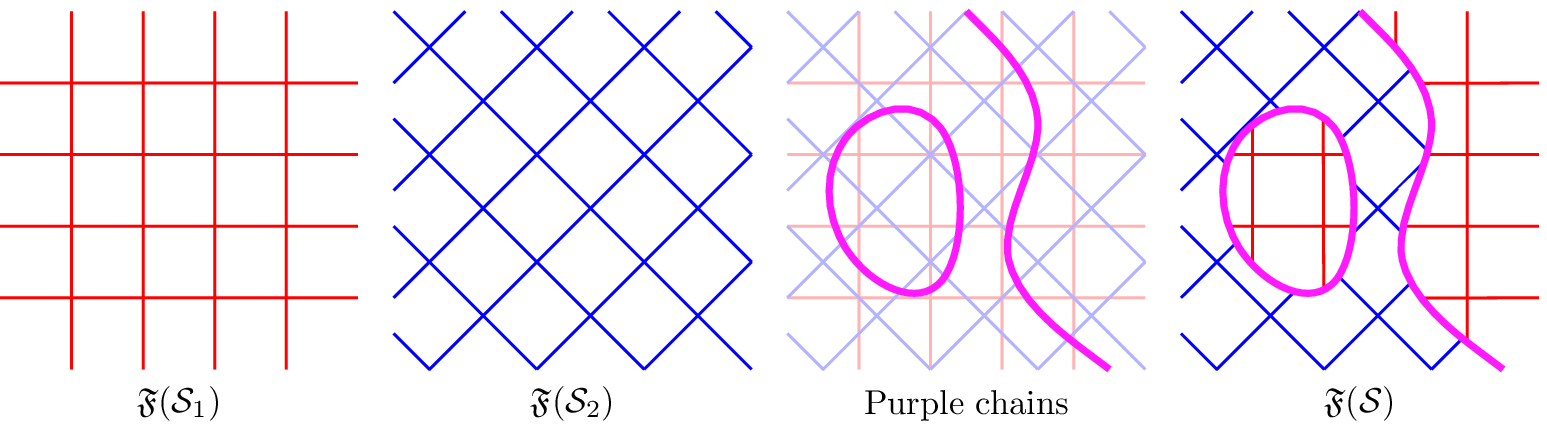}
  \caption{Merging $\FI$ and $\FII$ to obtain $\FS$.}
  \label{fig:algo}
\end{figure}

We show below how the purple chains of $\FS$ can be computed in time
$O(n\log^2 n)$. We first show how to compute at least one point on every chain
and then how to ``trace'' a chain from a starting point.

%

\begin{lemma}
  \label{lem:pure-inifinity-algo}
The vertices at infinity of the purple chains can be computed in time $O(n \log n)$.
\end{lemma}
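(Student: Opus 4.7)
The plan is to reduce the problem to a lower envelope computation in the space of directions. For a point $x = r\theta$ going to infinity in direction $\theta \in \mathbb{S}^{1}$, the distance to a polygonal site $P$ satisfies $d(x,P) = r - h_{P}(\theta) + O(1/r)$, where $h_{P}(\theta) = \max_{c \in P}\langle c, \theta\rangle$ is the support function of $P$, whose maximum is attained at some corner of $P$. Hence the Voronoi region of $\FS$ that reaches infinity in direction $\theta$ belongs to the site minimizing $h_{P}(\theta)$. Letting $H_{i}(\theta) = \min_{P \in \SITES_{i}} h_{P}(\theta)$ for $i \in \{1,2\}$, a purple pure vertex at infinity of $\FS$ is exactly a direction where $H_{1}(\theta) = H_{2}(\theta)$; by the general position assumption these meetings are transversal crossings and no other coincidence can occur.

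The first step is to extract $H_{1}$ and $H_{2}$ directly from the already-computed diagrams $\FI$ and $\FII$. A traversal of the unbounded face of $\FI$ yields, in cyclic angular order, its vertices at infinity (both pure and medial-axis) together with, on each intervening arc, the unique winning site $P \in \SITES_{1}$ and the unique corner $c$ of $P$ realizing $h_{P}(\theta)$. The piece of $H_{1}$ on that arc is then $\langle c, (\cos\theta, \sin\theta)\rangle$, a single sinusoidal piece. Since the total complexity of $\FI$ and $\FII$ is $O(n)$, both functions can be extracted in linear time, each with $O(n)$ sinusoidal pieces.

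Next I would sweep $\theta$ around $\mathbb{S}^{1}$, maintaining the currently active piece of $H_{1}$ and of $H_{2}$ and a priority queue of three event types: the next transition of $H_{1}$, the next transition of $H_{2}$, and the intersection (if any) of the currently active pieces of $H_{1}$ and $H_{2}$. Transitions are read directly from the input in $O(1)$ each. Intersections are computed on the fly: the equation $\langle c_{1} - c_{2}, (\cos\theta, \sin\theta)\rangle = 0$ between two extremal corners $c_{1}, c_{2}$ has at most two solutions in $[0, 2\pi)$ computable in $O(1)$. Each processed intersection event is reported as a purple vertex at infinity. With $O(n)$ events and $O(\log n)$ priority-queue work per event, the total running time is $O(n \log n)$.

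The main delicate point will be the bookkeeping that distinguishes true crossings of $H_{1}$ and $H_{2}$ (the purple vertices to output) from mere transitions inside a single envelope (which correspond to monochromatic vertices at infinity inherited from $\FI$ or $\FII$, not purple). Because every piece of $H_{i}$ is explicitly labeled with its owning family and extremal corner, this distinction is immediate and requires no further case analysis.
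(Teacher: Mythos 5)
Your proof is correct, but it follows a genuinely different route from the paper's. The paper sidesteps the subdiagrams entirely: it defines $g_i(\phi)$ as the signed distance from the origin to the tangent line to $P_i$ in direction~$\phi$, computes each $g_i$ directly from the convex hull $\conv(P_i)$ in $O(n_i\log n_i)$ time, observes (via Lemma~\ref{lem:linear-at-infinity}) that any two of these functions cross at most twice, and then computes the lower envelope of all $k$ of them with the standard $O(n\log n)$ envelope algorithm, which yields \emph{all} pure vertices at infinity; the purple ones are filtered out afterwards in $O(k)$ time. You instead read the two partial envelopes $H_1$ and $H_2$ (your support functions are, up to a rotation of the parameter, the same objects as the $g_i$) off the already-computed diagrams $\FI$ and $\FII$ by walking the circle at infinity, and then locate the purple vertices as the transversal crossings of $H_1$ and $H_2$ via a circular sweep. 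Both routes deliver $O(n\log n)$. The paper's version is more self-contained --- it does not need to justify that the cyclic order at infinity, the winning site, and the realizing corner can all be extracted from the subdiagrams, nor that edge cells contribute only isolated directions rather than arcs at infinity --- while yours avoids recomputing convex hulls from scratch and exploits the fact that $H_1$ and $H_2$ arrive already sorted, so in principle the merge step is even linear. One small remark on presentation: there is no single ``unbounded face'' of $\FI$; what you traverse is the sequence of vertices at infinity in cyclic angular order, which the paper defines, and the intervening arcs each carry a single (site, corner) label as you say --- edges of a convex hull show up only as the isolated directions that are medial-axis vertices at infinity, so they do not spoil the piecewise-sinusoidal description.
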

\begin{proof}
We show how to compute all the pure vertices at infinity in time $O(n \log
n)$. 
There are $O(k)$ such pure vertices (by Lemma~\ref{lem:linear-at-infinity}), and we can easily
deduce the purple ones  from those in $O(k)$ time. 
\SL{I modified the above sentence which was "Then, we can easily deduce the purple ones  from these $O(k)$ pure vertices (by
Lemma~\ref{lem:linear-at-infinity}), in $O(k)$ time." because I found it somehow quite confusing (it
sounded like the fact that we can "deduce" came from Lemma~\ref{lem:linear-at-infinity} which made no
sense). This was not specifically asked by the reviewer.}

For site $P_{i}\in \SITES$ and angle $\phi \in [0, 2\pi)$, let
  $\ell_{i}(\phi)$ be the oriented line with direction~$\phi$ tangent
  to~$P_{i}$ and keeping $P_{i}$ entirely on its left, see
  Fig.~\ref{fig:lower-envelope}. Let $g_{i}(\phi)$ be the signed
  distance from the origin to $\ell_{i}(\phi)$ (positive if the origin
  lies left of $\ell_{i}(\phi)$, negative otherwise).  If $P_{i}$ is a
  polygonal site of complexity $m$, we first compute the convex hull
  $\conv(P_{i})$ in time $O(m \log m)$, and we deduce  a
  description of the function $g_{i}$ in time~$O(m)$.

We then compute the lower envelope $g$ of the functions~$g_{i}$.  The
pure vertices at infinity correspond exactly to the breakpoints of
this lower envelope, since they correspond to half-planes (or disks with centers at infinity)
touching two sites and whose interiors intersect all other sites. Such a half-plane is
illustrated in gray in Fig.~\ref{fig:lower-envelope}. 
\referee{Referee \#1: Proof of Lemma 12, 2nd paragraph. I can see that the breakpoints of the lower
envelope of the $g_i$ is important in some way, but I don't see the 'half planes...whose interiors
intersect all other sites...'? Figure 6a does not help me in any way here. Also, I don't understand
what Lemma 4 has to do with the fact that $g_i$'s can pairwise intersect only twice, After all, they
are quadratic functions/parabolas (?) that are open on top/or the bottom (depending on the sign) and
that follows straight from there?}
\respond{Clarified the following sentence and improved the illustration in Figure 6a. Note that the
last comment is wrong because the functions $g_i$ are not quadratic functions, but piecewise quadratic.}
Moreover, two functions $g_{i}$, $g_{j}$ can
intersect at most twice since each intersection corresponds to a pure vertex at infinity of 
$\fpvd(\{P_i, P_j\})$ which admits at most two such vertex as argued in the
proof of Lemma~\ref{lem:linear-at-infinity}. 
Hence, the lower envelope can be computed in time $O(n \log n)$~\cite{sa-dsstg-95}.
%
  \end{proof}


\begin{figure}[t]
\centering
\subfigure[]{\includegraphics{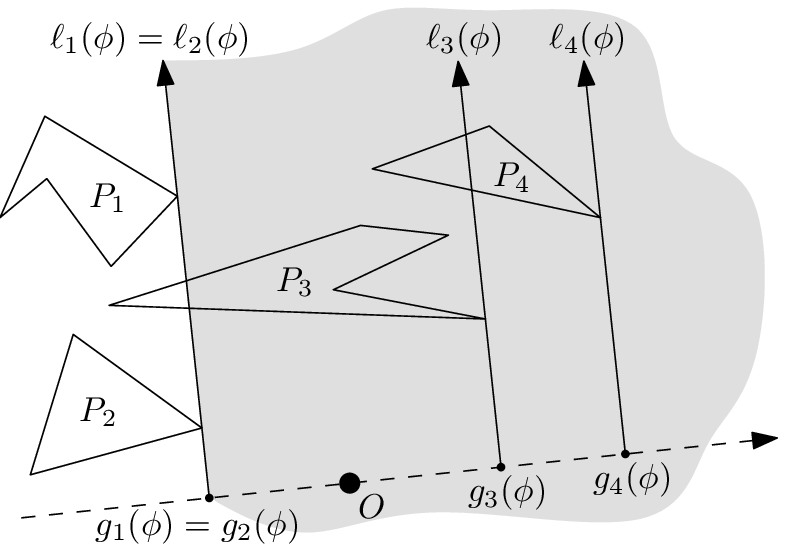}\label{fig:lower-envelope}}%
\hfill
\subfigure[]{\includegraphics{}\label{fig:lem-tracing}}%
\caption{(a) Constructing the pure vertices at infinity. (b) For the proof of
Lemma~\ref{lem:tracing}: a fiber $f_x$ cannot have more than one intersection point with purple arcs.}
 \end{figure}
\referee{Referee \#2: Caption of Fig 6b: $->$ "a fiber $f_x$ cannot have more than one intersection
points with purple arcs."}
\respond{Done.}

\begin{lemma}
  \label{lem:mixed}
  Any bounded purple chain contains a mixed  vertex  of $\FS$.
\end{lemma}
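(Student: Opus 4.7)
The plan is an extremal argument: locate the maximum of $\UDIST$ along the chain and argue that it must sit at a mixed vertex. As noted just before this lemma, every purple vertex not at infinity is incident to exactly two purple arcs, so a bounded purple chain~$\chi$ is topologically a closed curve, hence compact; the continuous function $\UDIST$ therefore attains its maximum on~$\chi$ at some point~$v$, which I claim is a mixed vertex of~$\FS$.

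First I would eliminate the easy cases. The point~$v$ cannot lie in the relative interior of a pure arc: such an arc lies on the bisector of two features $a,b$ on distinct sites, along which $\UDIST$ equals $d(\cdot,a)$; the restriction of $d(\cdot,a)$ to the bisector has a unique minimum and is strictly monotone away from it, so no interior local maximum exists. Medial axis vertices do not lie on purple chains, and vertices at infinity are excluded since~$\chi$ is bounded. So~$v$ must be a pure or a mixed vertex.

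The main obstacle is ruling out pure vertices. Suppose~$v$ is a pure vertex with defining features $a,b,u$ on three distinct sites, where~$u$ belongs to the minority-color site at~$v$, so that the two purple arcs of~$\chi$ meeting at~$v$ are $\alpha_{au}$ and $\alpha_{bu}$. The direction in which each such arc leaves~$v$ is dictated by the validity requirement that the third defining feature lie inside the tangent disk slightly past~$v$. A direct geometric check, performed on the circle of radius~$\UDIST(v)$ around~$v$ that passes through the three touching points of $D(v)$ (and analogously, on the appropriate local bisector curve when the features are edges), shows that $\UDIST$ strictly decreases along~$\alpha_{au}$ away from~$v$ only if the touching point of~$b$ lies in the short subarc between those of~$a$ and~$u$, with a symmetric statement for~$\alpha_{bu}$. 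These two conditions impose incompatible cyclic orderings of the three touching points, so at least one of the two purple arcs at~$v$ must have $\UDIST$ growing away from~$v$. This precludes~$v$ from being a local maximum of $\UDIST$ along~$\chi$, contradicting our choice of~$v$. Hence~$v$ is a mixed vertex, as required.
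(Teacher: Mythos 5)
Your proof is correct and follows essentially the same extremal argument as the paper: locate the maximum of $\UDIST$ on the compact chain, rule out the interior of arcs via convexity of the restricted distance, and rule out pure vertices so that the max must sit at a mixed vertex. The only difference is that where the paper simply cites the case-analysis encoded in Fig.~\ref{fig:vertices} (two pure arcs pointing toward a vertex force type (e) or (f)), you supply a direct geometric argument for the pure-vertex exclusion via the cyclic position of the three touching points on $\partial D(v)$; that argument is sound, and a cleaner way to phrase it is that decrease along $\alpha_{au}$ forces the $au$-arc containing $b$ to have measure less than $\pi$, while decrease along $\alpha_{bu}$ forces the $bu$-arc containing $a$ to have measure less than $\pi$, and both together would make the two arcs not containing the respective third point each exceed $\pi$, an impossibility.
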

\begin{proof}
A bounded purple chain is a compact set in the plane, and so the restriction of
$\UDIST(\cdot)$ to such a chain admits a maximum. Such a maximum appears at
a vertex, denoted $v$. Indeed, since an arc $\alpha$ of $\FS$ is defined by two
features (corners or edges), the graph of $\UDIST(\cdot)$ restricted to
$\alpha$ is the intersection of the two Voronoi surfaces---which are cones or
wedges---induced by the two features; thus $\UDIST(\cdot)$ cannot have a
local maximum in the interior of~$\alpha$ (it may have a local minimum).

Consider now the arcs of $\FS$ oriented, in a neighborhood of their endpoints,
in the direction of increasing $\UDIST(\cdot)$. Then, the two purple arcs
incident to $v$ point toward $v$. Now observe that purple arcs are pure and a
vertex incident to at least two pure arcs pointing toward it is of type (e) or
(f), which is mixed (see Fig.~\ref{fig:vertices}). Hence, $v$ is a mixed vertex
of $\FS$, which concludes the proof.
\end{proof}

\begin{lemma}
  \label{lem:mixed-algo}
Given the farthest-polygon Voronoi diagrams of  two families
$\SITES_1$ and $\SITES_2$ of pairwise disjoint polygonal sites, the 
  mixed vertices  of $\fpvd(\SITES_1\cup\SITES_2)$ can be computed in time $O(n \log^2 n)$. 
\end{lemma}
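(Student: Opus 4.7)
The plan is to determine the mixed vertices of $\FS$ by splitting them into two families according to the provenance of the two sites involved, and to handle each family in $O(n\log^2 n)$ time using point-location data structures on $\FI$ and $\FII$.

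First I build point-location structures $\mathcal{D}_1$ on $\FI$ and $\mathcal{D}_2$ on $\FII$ in total time $O(n\log n)$; a query at a point $x$ returns $\UDIST_{\SITES_i}(x)$, together with the nearest feature realising the value, in $O(\log n)$ time. A mixed vertex $v$ of $\FS$ involves a site $P$ contributing two features to $D(v)$ and a site $Q$ contributing one. When $P$ and $Q$ belong to the same family $\SITES_i$, $v$ is already a mixed vertex of $\fpvd(\SITES_i)$ and is a mixed vertex of $\FS$ if and only if $d(v,P)\geq \UDIST_{\SITES_{3-i}}(v)$; by Lemma~\ref{lem:number-mixed} there are only $O(n)$ such candidates, each tested with a single point-location query, for a total of $O(n\log n)$.

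When $P\in\SITES_1$ and $Q\in\SITES_2$ (treated symmetrically for the reverse), the vertex $v$ must lie on a medial-axis arc $\alpha$ of $\FI$, i.e.\ on an arc of $\MA(P)$ that lies in the region of $P$ in $\FI$, and must satisfy $h_P(v):=d(v,P)-\UDIST_{\SITES_2}(v)=0$. Conversely, every zero of $h_P$ on such an arc is a cross-family mixed vertex. Applying Lemma~\ref{lem:tree} to the tree $\TREE$ of $\MA(P)$ containing $\alpha$, the set $\TREE\cap\REG(P)$ is a connected subtree of $\TREE$. Since the unique path in $\TREE$ between two points of $\alpha$ is itself a sub-arc of $\alpha$, the intersection $\alpha\cap\REG(P)=\{x\in\alpha:h_P(x)\geq 0\}$ is a single (possibly empty) subinterval of $\alpha$, and $h_P$ has at most two zeros on $\alpha$.

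For each of the $O(n)$ medial-axis arcs $\alpha$ of $\FI$ and $\FII$, I evaluate the sign of $h_P$ at the two endpoints with the appropriate point-location structure. If the signs differ the unique zero is located by parametric search on $\mathcal{D}_2$, with position along $\alpha$ as the parameter and a point-location query as the $O(\log n)$-time decision oracle; the zero is found in $O(\log^2 n)$ time. If both endpoint values are nonnegative, the whole arc lies in $\REG(P)$ and contributes no mixed vertex in its interior.

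The main obstacle is the remaining subcase: both endpoint values of $h_P$ are negative but an interior \emph{bump} pushes $h_P$ above zero, producing two cross-family mixed vertices. A naive sign-difference binary search never detects this bump. I resolve it by parametrically searching for the maximiser of $h_P$ on $\alpha$ via $\mathcal{D}_2$; if the maximum is negative the arc carries no cross-family mixed vertex, and otherwise the maximiser lies in $\REG(P)$ and seeds two one-sided parametric searches that pin down the two boundary zeros in $O(\log^2 n)$ time. Summing $O(\log^2 n)$ over the $O(n)$ arcs yields the claimed $O(n\log^2 n)$ bound.
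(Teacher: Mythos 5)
Your overall decomposition (same-family mixed vertices read off from $\FI$ and $\FII$ and filtered by one point-location query; cross-family mixed vertices confined to medial-axis arcs of $\FI$ or $\FII$, at most two per arc by Lemma~\ref{lem:tree}; the easy subcases where an endpoint of the arc lies in $\REG(P)$ handled by parametric search running a point-location query on the unknown zero of $h_P$) is sound and close in spirit to the paper's, which organizes the work per tree of $\MA(P)$ rather than per arc but uses the same connectivity lemma and the same point-location-based parametric search.

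The genuine gap is exactly where you flag the ``bump'' subcase: both endpoints of $\alpha$ have $h_P<0$ and the interval $\{h_P\ge 0\}$, if nonempty, is strictly interior. You propose to ``parametrically search for the maximiser of $h_P$ via $\mathcal{D}_2$,'' but you never say what the decision oracle is, and I do not believe one exists in the form you need. Lemma~\ref{lem:tree} only guarantees that the superlevel set $\{x\in\alpha:h_P(x)\ge 0\}$ is an interval; it does \emph{not} make $h_P=d(\cdot,P)-\UDIST_{\SITES_2}$ unimodal on $\alpha$ ($d(\cdot,Q)$ to a non-convex connected site is neither convex nor concave, and the max over $\SITES_2$ can have many pieces), so $h_P$ can have several negative local maxima. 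In the parametric-search framework, when the point-location algorithm queries the unknown $s^*$ against a comparator $\gamma$, you split $\alpha$ at the $O(1)$ intersection points $x_1,\dots,x_s$ and must decide which subinterval contains $s^*$. Evaluating $h_P(x_i)$ at those points tells you nothing about which piece hides the global maximiser; knowing the local derivative does not help either, for the same non-unimodality reason. This is precisely the hole the paper plugs with extra machinery: the cylinder $\cyl(p,q)$ and the boolean condition $G(p,q)$, together with Lemma~\ref{lem:nc} (necessity) and, crucially, Lemma~\ref{lem:queen} (at most one subinterval can satisfy $G$ in both directions). That uniqueness is what turns the $G$-test into a valid decision oracle for the parametric search when no sampled point lies in $\REG(P)$. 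Without an analogue of $G$ and the uniqueness lemma, your maximiser search is not an algorithm, and the hard subcase remains unresolved.
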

Computing the mixed vertices in time $O(n \log^2 n)$ is the most subtle part
of the algorithm and we postpone the proof of this lemma to
Section~\ref{sec:mixed}, after showing how to compute the purple chains from
some starting point (Lemma~\ref{lem:purple-chains-algo}).
We start with a preliminary lemma which
is an important consequence of Lemma~\ref{lem:fiber}.

\begin{lemma}
  \label{lem:tracing}
  Let $f_{x}$ be the fiber of point $x$ in a cell $C$ of $\FI$ or $\FII$.
  Then, the relative interior of 
  $f_{x}$  intersects the  purple arcs of~$\FS$ in at most one point. 
\end{lemma}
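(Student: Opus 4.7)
The plan is to leverage the disk-nesting observation that already underlies Lemma~\ref{lem:fiber}. Without loss of generality, suppose $C$ is a cell of $\FI$ associated with a feature $w$ of some site $P \in \SITES_{1}$. Every point $y \in f_{x}$ has the same closest point $x\s$ on $P$, so $d(y,P) = |y - x\s|$ grows linearly as $y$ slides along $f_{x}$ away from $x\s$. Each disk $D_{P}(y)$ therefore passes through $x\s$, and $x\s$ lies on the segment joining $x$ to~$y$ extended (or on $f_x$ itself in the corner case).

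The key monotonicity is then immediate: for $y, y' \in f_{x}$ with $y'$ strictly farther from $x\s$ than $y$, the disks $D_{P}(y)$ and $D_{P}(y')$ are internally tangent at $x\s$ and $D_{P}(y) \subseteq D_{P}(y')$. Hence, if any site $Q \in \SITES_{2}$ meets $D_{P}(y)$, it also meets $D_{P}(y')$; equivalently, $d(y,P) \geq d(y,Q)$ implies $d(y',P) \geq d(y',Q)$.

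Next, I would observe that a point $y \in C$ lies in a red region of $\FS$, i.e.\ $P$ remains its overall farthest site, precisely when $D_{P}(y)$ meets every $Q \in \SITES_{2}$, since $P$ is already the farthest site of $\SITES_{1}$ throughout~$C$. The monotonicity above then shows that the \emph{red} status is preserved as $y$ moves along $f_{x}$ away from $x\s$: the red portion of $f_{x}$ is a terminal (possibly empty, possibly full) subsegment, and the complementary blue part is an initial subsegment. Inside $C$, every transition between red and blue regions crosses a purple arc, so the relative interior of $f_{x}$ meets the purple arcs in at most one point, namely the unique blue-to-red switch when it exists. The only technical ingredient is the disk-containment statement, which is essentially the one already used in the proof of Lemma~\ref{lem:fiber} and presents no additional difficulty; this is presumably what Figure~\ref{fig:lem-tracing} illustrates.
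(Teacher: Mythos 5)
Your argument is correct and rests on the same disk-nesting observation that underlies Lemma~\ref{lem:fiber}; where the paper argues by contradiction (assuming two purple-arc hits $p,q$ on $f_x$ with $q$ on $f_p$, choosing $p'\in\REG(P)$ near $p$, and noting that $f_{p'}\ni q$ would then have to remain inside $\REG(P)$ by Lemma~\ref{lem:fiber}), you establish the equivalent fact directly by showing that ``red status'' is monotone along the fiber. A small bonus of the direct formulation is that strict monotonicity of disk containment also rules out tangential contacts of $f_x$ with a purple arc, a case the paper dismisses separately by perturbing to a nearby fiber.
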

\begin{proof}
  Assume, for a  contradiction, that the relative interior of $f_x$ intersects
  two purple arcs in two distinct 
points $p$ and $q$, where $q$  lies on~$f_{p}$ (see Fig.~\ref{fig:lem-tracing}). 
\referee{Referee \#1: Proof of Lemma 15: ' w.l.o.g. p is not a point of tangency between $f_x$ and the
purple chain...'. Maybe it's better to write that the purple chain and $f_x$ cross.}
\referee{Referee \#2: p10, l-7: If x* happens to be a corner of P, there might not be another fiber
close to $f_x$ that intersects the purple chain twice. Does the general position assumption help?}
\respond{The previous version of the following sentence troubled both referees and we rewrote it
following essentially the suggestion of the first referee.}
\SL{We had  "We can assume, without loss of
  generality that $p$ is not a point of tangency between $f_x$ and the purple chain, because,
  otherwise, there exists another fiber close to $f_x$ that intersects transversally the purple
  chain in two points." This puzzled two 2 referees. The second referee seems partiall wrong or at least I don;t see the connection with x* being a
  corner. However, I believe that if the purple chain is tangent and cross $f_x$, our sentence was
  wrong or at least we missed a S in  "... that intersects transversally the purple
  chainS in two points.". Anyway, it seems that the following sentence should/might satisfy both reviewers
  and that the proof remains correct after that. AM I MISSING A REASON WHY WE GOT RID OF TANGENCY
  RATHER THAN CROSSING?}
We can assume, without loss of
  generality that the purple chain and $f_x$ cross at $p$, because,
  otherwise, there exists another fiber close to $f_x$ (for instance, $f_y$ for some $y$ close to
  $x$) that intersects transversally the purple chains in two points. 
Now, let $P$ be the site containing the feature $w$
  associated with $C$.  In $C$, the purple arcs bound the cell of $\FS$ belonging to feature $w$.
  Hence, there is a point $p'$ on $f_{x}$ sufficiently close to $p$ such that $p' \in \REG(P)$ in $\FS$.
Moreover, the fiber $f_{p'}$ is,
  by definition, a subset of the fiber $f_x$. Thus, the fiber $f_{p'}$ contains $q$ and thus
  intersects a purple arc, contradicting the fact that $f_{p'}$ lies in $\REG(P)$ in $\FS$ (by
  Lemma~\ref{lem:fiber}).
%
\end{proof}

\begin{lemma}
  \label{lem:purple-chains-algo}
Given the farthest-polygon Voronoi diagrams $\fpvd(\SITES_1)$ and $\fpvd(\SITES_2)$ of  two families $\SITES_1$ and $\SITES_2$ of pairwise disjoint polygonal sites,  the  purple chains of $\fpvd(\SITES_1\cup\SITES_2)$ can be computed in time $O(n \log^2 n)$. 
\end{lemma}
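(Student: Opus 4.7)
The plan is to first identify a starting point on every purple chain and then trace each chain from there. By Lemma~\ref{lem:pure-inifinity-algo}, the vertices at infinity of the purple chains can be computed in $O(n\log n)$ time, giving a starting point on every unbounded chain. For bounded chains, Lemma~\ref{lem:mixed-algo} computes all mixed vertices of $\FS$ in $O(n\log^2 n)$ time; Lemma~\ref{lem:mixed} then guarantees that each bounded purple chain contains at least one such vertex. Thus after $O(n\log^2 n)$ preprocessing we have at least one starting point on every purple chain, and a simple marking scheme lets us decide whether a given chain has already been traced from the other end.

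To trace a chain I would maintain, at every stage of the walk, the current purple arc together with the cell $C_1$ of $\FI$ and the cell $C_2$ of $\FII$ whose defining features $w_1$ and $w_2$ induce its bisector; the arc is contained in $C_1\cap C_2$. I then follow the bisector until it first crosses the boundary of $C_1$ or of $C_2$. That crossing is a pure vertex of $\FS$: the bisector bends because either $w_1$ or $w_2$ is replaced by an adjacent feature in $\FI$ or $\FII$, and the walk continues with the new bisector. It terminates at an already-processed starting vertex (closing a bounded chain) or at a vertex at infinity.

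The main obstacle is executing each step in polylogarithmic time. I would preprocess $\FI$ and $\FII$ into point-location and cell-boundary structures in $O(n\log n)$ time, which also locates each starting point. To find where the current bisector exits $C_1$, I would exploit Lemma~\ref{lem:tracing}: within $C_1$ the purple chain crosses each fiber of $C_1$ at most once, so by Lemma~\ref{lem:monotonicity2} the two chains bounding $C_1$ are monotone with respect to the fiber direction, and the exit point depends monotonically on the entry point. Storing these monotone chains of every cell of $\FI$ and $\FII$ in balanced search trees (globally $O(n\log n)$ time) then lets me find the next exit by binary search in $O(\log n)$ time; one symmetric query is performed against $C_2$ and the earlier of the two exits wins.

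By Theorem~\ref{thm:size} the total complexity of $\FS$ is $O(n)$, so the number of transitions along all purple chains is $O(n)$. The tracing phase therefore takes $O(n\log n)$ time, and adding the $O(n\log^2 n)$ cost of producing the starting points yields the claimed bound. The delicate point in making this rigorous will be the case analysis at each purple vertex (types~(c) and~(d) of Fig.~\ref{fig:vertices}): we must verify that the monotone-search invariant is correctly maintained when switching from an ``upper'' boundary arc of a cell to a ``lower'' one, and that unbounded chains are correctly continued past vertices at infinity produced by Lemma~\ref{lem:pure-inifinity-algo}.
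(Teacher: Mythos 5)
Your strategy has the same overall shape as the paper's: obtain one starting vertex on every purple chain (vertices at infinity via Lemma~\ref{lem:pure-inifinity-algo} for the unbounded chains, mixed vertices via Lemmas~\ref{lem:mixed} and~\ref{lem:mixed-algo} for the bounded ones), then walk each chain arc by arc through the cells of $\FI$ and $\FII$. The first half is correct and matches the paper. The gap is in the tracing step, precisely the part the paper singles out as the subtle one.

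You claim that, given the current bisector between features $w_1$ and $w_2$ and the current cells $C_1\subset\FI$, $C_2\subset\FII$, the first exit of the bisector from $C_1$ (and symmetrically from $C_2$) can be found by a $O(\log n)$ binary search on the monotone boundary chains of $C_1$. But you never exhibit a monotone predicate. The natural candidate, ``is the point of the bisector at fiber-coordinate $s$ inside $C_1$?'', is \emph{not} monotone in $s$: the bisector of $w_1$ and $w_2$ is a full line or parabola, and although its \emph{purple} portion lies inside $C_1\cap C_2$, the curve can exit $C_1$ and re-enter it further along. Re-entries make the predicate non-monotone, so a naive binary search could overshoot the first exit. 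The monotonicity you invoke from Lemmas~\ref{lem:monotonicity2} and~\ref{lem:tracing} concerns the purple chain and the cell boundary separately; it does not say that the bisector curve, taken as a whole, crosses $\partial C_1$ exactly once, and the phrase ``the exit point depends monotonically on the entry point'' describes a relation across tracings, not a searchable order along one boundary. Also, your stated cost model ($O(\log n)$ per transition, $O(n)$ transitions) silently assumes the query problem is a clean point/segment search, whereas it is really a ray-shooting-like query of a degree-$2$ curve against a piecewise-quadratic chain; making that a $O(\log n)$ query would require a different, heavier data structure than ``balanced search trees.''

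The paper avoids this problem entirely by \emph{not} solving each exit query independently. It performs a simultaneous sweep of $C_1$ and $C_2$, parameterized by a point moving monotonically along the purple half-bisector, and tests the bisector only against the one boundary arc currently under the sweep line in each cell. The cost is controlled by a charging argument: an arc that is swept in full is charged to that arc, and by Lemma~\ref{lem:tracing} the purple chain is fiber-monotone inside $C_i$, so that arc is never swept again for any later purple half-bisector; arcs only partially swept (at the two ends of a sweep) are charged to the corresponding purple vertex, which happens $O(1)$ times per vertex. This yields $O(n)$ amortized work for the sweeps plus $O(n\log n)$ for locating the initial arc at each of the $O(n)$ sweep starts, hence $O(n\log n)$ for tracing and $O(n\log^2 n)$ overall once the starting vertices are known. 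Your proof would need either to establish a genuinely monotone predicate for the exit search (which, as argued, does not come for free), or to switch to the sweep-and-charge scheme.

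Two smaller points: the vertex where a purple arc leaves $C_1$ is not always a \emph{pure} vertex of $\FS$; if it crosses a medial-axis arc of $\FI$ it is a mixed vertex, and the walk continues along a different bisector with the same two sites. And, as the paper notes, the mixed vertices produced by Lemma~\ref{lem:mixed-algo} must be filtered to keep only the purple ones (those whose third feature belongs to the other family) before being used as starting vertices; this is easy but should be said.
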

\begin{proof}
\referee{Referee \#2: p11, l3 and p13, l16: I would prefer "Lemmata".}
\respond{This is really a question of style and we prefer "Lemmas"}
  By Lemmas~\ref{lem:pure-inifinity-algo}, \ref{lem:mixed} and~\ref{lem:mixed-algo}, we can compute the vertices at
  infinity of the purple chains, and a superset of size $O(n)$ of at least one mixed vertex per bounded
  component. As we have seen in the proof of Lemma~\ref{lem:mixed}, these latter vertices are of
  type (e) or (f) (see Fig.~\ref{fig:vertices}); they thus involve only two sites  and we 
  discard all those that involve two sites of $\SITES_1$ or two sites of $\SITES_2$. We thus obtain
  a set of $O(n)$ purple vertices with at least one such vertex on each purple chain. 
We consider each such vertex, denoted $v$, in turn, and ``trace'' (construct) the purple chain that $v$ lies on.

If $v$ is at infinity, there is only one purple arc incident to it; otherwise, there are two and we consider one of them.
The bisector supporting the incident purple arc is that of one red and one
blue feature among the features defining $v$. Splitting the bisector at $v$ defines two semi-infinite
curves incident to $v$ and we can determine, using the respective position of the three features defining $v$,
which of these two curves supports the considered purple arc; 
we call this semi-infinite curve a purple half-bisector.

We then trace the purple chain by following its purple arcs from cell to cell: 
observe that the other endpoint of the considered purple arc incident to $v$ is either at infinity
or is the first
intersection point (starting from $v$) between the purple half-bisector
and the cell boundaries of  $\FI$ and~$\FII$.
We compute this point as follows.

We first locate $v$ in $\FI$ and $\FII$. This can be done in $O(\log n)$ time, assuming that
we have precomputed a point-location data structure  for  $\FI$ and $\FII$
in $O(n\log n)$ time (see, for instance,~\cite{edelsbrunner86}).%
\,\footnote{In fact, we will see in Section~\ref{sec:mixed} that the
combinatorial description of $v$ and all the information regarding its location
in $\FI$ and $\FII$ are already available as a by-product of the computation of
the mixed vertices. This makes the location procedure described here not
strictly necessary.}
Note that $v$ is
a vertex of $\FS$ and that it involves two features of one site, say in
$\SITES_1$, thus $v$ lies on a medial axis arc of $\FI$ induced by these two features. 
We then also determine the cell of $\FI$, denoted $C_1$, that contains the purple half-bisector in a
neighborhood of $v$ (this is a constant size problem).
Since
$v$ is a purple vertex, its third feature belongs to a site of $\SITES_2$, and $v$
lies in the cell of $\FII$, denoted $C_2$, belonging to that feature.
Let $w_i$ denote the feature associated with $C_i$, $i=1,2$.%

\referee{Referee \#2: p11, l24 "We then determine whether the half-bisector goes left or right..."
This should be stated more precisely....}
\referee{Referee \#1: Proof of Lemma 16: page 11 line 41: 'We have thus computed one point of
intersection OF THE PURPLE HALF-BISECTOR with the boundary...and another one WITH the "}
\referee{Referee \#1: page 11 line 45/46: Why is a sweep 'in parallel' necessary for the time complexity? I don't see this. Also, (line 50 onwards) it seems that you don't really go parallel, but that you are jumping between cells $C_1$ and $C_2$ with the sweep, depending on which one seems more 'promising' at the moment? Please clarify.}
\referee{Referee \#2: p12, l1-13: ...because the time analysis depends crucially on the monotonicity property. One cell can be visited several times by different purple chains. Also, it seems that a cell can be split into several subcells during the merge. Given these complications, I think you need to argue more clearly why multiple work can be avoided thanks to monotonicity. Perhaps some illustrations would be helpful.}

\respond{The rest of this proof has been substantially  rewritten to clarify these issues.}


Then, for $i=1,2$, we 
 sweep the cell $C_i$ with a line orthogonal to the feature $w_i$
if $w_i$ is an edge and with a line through the feature $w_i$ if $w_i$  is a corner. 
If $w_i$ is a corner, the sweep  is done clockwise or counterclockwise so that the sweep line
intersects the half-bisector inside $C_i$ 
(deciding between clockwise or
counterclockwise is a constant size problem);
the situation is similar when $w_i$ is an edge.

The two cells $C_1$ and $C_2$ are swept simultaneously. 
However, since the two sweeps are not a priori performed using the same sweep line, this
requires some care.  For clarity, we first present each sweep independently.

By Lemma~\ref{lem:monotonicity2}, the sweep line always intersects one arc of the upper chain of
$C_i$ (or two arcs at their common endpoints) and similarly for the lower chain.  We first determine
the arcs of the upper and lower chains 
that are intersected by the sweep line
through $v$ (or about to be intersected if the sweep line goes through a vertex of the chain).  We
also determine the intersection, if any, of these two arcs with the purple half-bisector.  When the
sweep line reaches an endpoint of one of the two arcs that are being swept, we determine the
intersection (if any) between the new arc and the purple half-bisector.  When the sweep line reaches
the first of the computed intersection points between the purple half-bisector and the boundary of
the cell, we report this intersection point and terminate the sweep of $C_i$.

If the two sweeps were performed independently, we could report the first point where  the purple half-bisector
exits one of the cells $C_1$ or $C_2$, and continues the tracing in a neighboring cell of
either $\FI$ or $\FII$, along a new purple half-bisector. This however would not yield the claimed
complexity because, roughly speaking,  if  a sequence of purple arcs enter and
exit $\Theta(n)$ cells  $C_1,C_1',C_1'',\ldots$ while remaining inside a cell $C_2$ of complexity
$\Theta(n)$, the cell $C_2$ would be swept many times, possibly leading to a complexity of $\Theta(n\log n)$
per arc, and a total of  $\Theta(n^2\log n)$. 
We thus perform the  two sweeps  simultaneously, as follows.

Note first that, by Lemma~\ref{lem:tracing}, during the sweep of $C_i$, the point of intersection,
in $C_i$, between the sweep line and the purple half-bisector moves \emph{monotonically} along the
purple half-bisector. 
We can thus parameterize the sweeps of $C_1$ and $C_2$ 
by a point $x$ moving monotonically on the purple half-bisector away from $v$. The point $x$ define 
two sweep lines (the lines through $x$ and through $w_i$ or orthogonal to $w_i$ depending on the
nature of $w_i$), and  the events are those of the sweeps of $C_1$ and $C_2$. 
This sweep ends when the purple half-bisector leaves $C_1$ or $C_2$.
Then, the tracing of the purple chain continues in a neighboring cell of
either $\FI$ or $\FII$, along a new purple half-bisector.
We stop tracing the purple chain when we reach a vertex at infinity or the
vertex $v$ we started from. 

We then consider a new starting point $v$; note that 
we can easily check in $O(\log n)$ time whether it has already been computed
(while tracing some purple chains)
by maintaining the list of the already computed vertices on the purple chains,
ordered lexicographically by their features.

We now analyze the complexity of the algorithm. Consider first the initialization of every sweep,
that is the determination of the arcs of $C_1$ and $C_2$ that are intersected by the two sweep
lines through $v$. There are $O(n)$ such initialization steps to perform, since $\FS$ have size
$O(n)$ by Theorem~\ref{thm:size}, and each step can be done using binary search in $O(\log n)$ time, after
preprocessing all the cells of $\FI$ and $\FII$; the preprocessing (which simply is storing the ordered
vertices of the upper and lower chains of each cell in arrays) can be done in time linear in the total size of
the cells, which is $O(n)$, by Theorem~\ref{thm:size}.

We finally analyze the complexity of the rest of the algorithm by applying a simple charging
scheme. 
Note first that intersecting the purple half-bisector with an arc
of a cell takes constant time. We charge the cost of computing these
intersections
to either the purple vertices or to the arcs of $C_i$, as follows. The intersections with
each of the arcs that are swept at the beginning and the end of the sweep are
charged to the corresponding endpoint of the purple arc. Every purple vertex
is thus charged at most eight times because each of the two incident purple edges is intersected
with one arc of each of the
lower and upper chains of each of the two cells $C_1$ and $C_2$.
The intersections with each of the
other arcs of $C_i$ are charged to the arc in question. Every such arc is charged at most once per cell, and
thus at most twice in total; indeed, such an arc of $C_i$ is swept entirely during the
sweep of $C_i$ and thus, by Lemma~\ref{lem:tracing}, it will not be swept again during
another sweep of cell $C_i$ (when treating another purple half-bisector).
Since $\FI$, $\FII$, and $\FS$ have size $O(n)$ (Theorem~\ref{thm:size}), all
the purple arcs can be computed in $O(n\log n)$ time, in total, once the set of
starting points are known, and thus in $O(n\log^2n)$ time by
Lemmas~\ref{lem:pure-inifinity-algo} and~\ref{lem:mixed-algo}.
%
\end{proof}

\begin{lemma}
  \label{lem:merging}
  Given the farthest-polygon Voronoi diagrams of two families
  $\SITES_1$ and $\SITES_2$ of pairwise disjoint polygonal sites, the
  farthest-polygon Voronoi diagrams of $\SITES_1\cup\SITES_2$ can be
  computed in time $O(n \log^2 n)$.
\end{lemma}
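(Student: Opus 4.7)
The plan is to reduce the lemma to Lemma~\ref{lem:purple-chains-algo} together with a linear-time assembly step. First I would compute all purple chains of $\FS$ using Lemma~\ref{lem:purple-chains-algo}, in time $O(n\log^{2} n)$; this is the bottleneck, so everything that follows must fit into $O(n)$ additional time.

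Then I would exploit the structural fact (already used informally just above Lemma~\ref{lem:pure-inifinity-algo}): inside any red region of $\FS$, the pointwise-farthest site lies in $\SITES_{1}$, so the local subdivision of that region into cells, medial axis arcs, and red pure arcs depends only on $\SITES_{1}$ and therefore coincides exactly with $\FI$ restricted to that region; the symmetric statement holds for blue regions and $\FII$. Consequently $\FS$ can be assembled by cutting $\FI$ along the purple chains and retaining only the fragments on the red side, cutting $\FII$ along the purple chains and retaining only the fragments on the blue side, and gluing the two resulting pieces to the purple chains along their shared boundaries.

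For the linear-time implementation, I would rely on the fact that the tracing procedure of Lemma~\ref{lem:purple-chains-algo} already emits, for every purple arc, the sequence of cells of $\FI$ (resp.\ $\FII$) that it visits, together with the entry and exit points on the boundary of each such cell; this is exactly the data needed to realize the overlay of the purple chains with $\FI$ and with $\FII$ as a planar subdivision. Each purple arc is defined by one red feature and one blue feature, so its red and blue sides can be identified locally in constant time; propagating this labeling by a traversal of the overlay then assigns a color to every cell fragment. Since $|\FI|+|\FII|+|\FS|=O(n)$ by Theorem~\ref{thm:size} and the per-cell boundary subdivisions are produced in order by the tracing, the cutting, discarding, and gluing together take $O(n)$ time in total, yielding the diagram $\FS$ as the union of the red fragments of $\FI$, the blue fragments of $\FII$, and the purple chains.

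The only real difficulty is already absorbed by Lemma~\ref{lem:purple-chains-algo}; what remains is the bookkeeping required to merge $\FI$, $\FII$, and the purple chains into a single planar subdivision consistently, which is routine given the traversal information produced by the tracing. Summing $O(n\log^{2} n)$ for the purple chains with $O(n)$ for the assembly gives the claimed $O(n\log^{2} n)$ running time.
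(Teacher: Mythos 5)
Your proposal is correct and takes essentially the same route as the paper: compute the purple chains via Lemma~\ref{lem:purple-chains-algo} (the $O(n\log^{2}n)$ bottleneck), split the arcs of $\FI$ and $\FII$ at the purple vertices discovered during tracing, and assemble $\FS$ in linear time from the red portions of $\FI$ and the blue portions of $\FII$. The paper states this more tersely, but the underlying argument and the source of the time bound are identical to yours.
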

\begin{proof}
  During the computation of the purple chains (see the proof of
  Lemma~\ref{lem:purple-chains-algo}), we can split the arcs of $\FI$
  and $\FII$ at every new purple vertex that is computed. Then,
  merging $\FI$ and $\FII$ can trivially be done in linear time since,
  as we mentioned before, the diagram $\FS$ consists of those portions
  of $\FI$ lying in the red regions of $\FS$, and those portions of
  $\FII$ lying in the blue regions of~$\FS$, see Fig.~\ref{fig:algo}.
\end{proof}

\subsection{Computing the mixed vertices}
\label{sec:mixed}

We prove here Lemma~\ref{lem:mixed-algo} stating that, given the
farthest-polygon Voronoi diagrams of two families $\SITES_1$ and $\SITES_2$ of
pairwise disjoint polygonal sites, the mixed vertices of
$\fpvd(\SITES_1\cup\SITES_2)$ can be computed in time $O(n \log^2 n)$.

We start by computing the randomized point-location data structure of
Mulmuley~\cite{m-fppa-90} (see also \cite[Chapter~6]{bkos-cgaa-08})
for the two given Voronoi diagrams~$\FI$ and~$\FII$. (We chose this
randomized algorithm for its simplicity; randomization can however be
avoided as we explain at the end of this section.)  This data
structure only needs two primitive operations: (i) for a given point
$p$ in the plane, determine whether the query point lies left or right
of~$p$, and (ii) for an $x$-monotone line segment or parabolic
arc~$\gamma$, determine whether the query point lies above or
below~$\gamma$.  Both cases can be summarized as follows: Given a
\emph{comparator} $\gamma$, determine on which side of $\gamma$ the
query point lies.  The comparator can be either a line or a parabolic
arc.

\referee{Referee \#1: page 12, line 35, 38, 61: parabola arc - PARABOLIC arc?}
\respond{Done.}

We compute the mixed vertices lying on each tree $\TREE$ of each medial axis $\MA(P)$ separately.
The intersection $\TREE \cap \REG(P)$ is a
connected subtree by Lemma~\ref{lem:tree}. 
\referee{Referee \#2: Perhaps "locate" would be better than "determine".}
\respond{Done}
We can locate the internal
vertices of this subtree easily, by performing a point location operation for
each vertex $v$ of $\TREE$ in $\FI$ and $\FII$, deducing which site is
farthest from~$v$ and checking 
if the farthest site
from $v$ is $P$. Let $I$ be the set of vertices of $\TREE$ that lie in
$\REG(P)$. We now need to consider two cases.

\subsubsection{When $I$ is not empty}

If $I$ is non-empty, then every arc $\alpha$ of $\TREE$ incident to
one vertex in~$I$ and one vertex not in~$I$ must contain exactly one
mixed vertex $s\s$ by Lemma~\ref{lem:tree}. To locate this vertex
$s\s$, we use \emph{parametric search} along the
arc~$\alpha$~\cite{m-apcad-83}, 
\referee{Referee \#2: Since you are referring to Megiddo's original paper, you should include some
remarks on the parallel algorithm you want to employ (or cite a black box approach, e.g., van
Oostrum/Veltkamp).}
\respond{Clarified as follows.}
albeit in a restricted way, as we do not need to use
parallel computation: The idea is to execute two point
location queries in $\FI$ and~$\FII$ using $s\s$ as the query point.
Each query executes a sequence of primitive operations, where we
compare the (unknown) location of $s\s$ with a comparator $\gamma$ (a
line or a parabolic arc).  This primitive operation can be implemented
by intersecting $\alpha$ with $\gamma$, resulting in a set of at most
four points.  In $O(\log n)$ time, we can test for each of these
points whether it lies in $\REG(P)$.  This tells us between which of
these points the unknown mixed vertex $s\s$ lies, and we can answer
the primitive operation.

It follows that we can execute the two point location queries on $s\s$
in time~$O(\log^{2} n)$, and we obtain one cell and one arc of $\FI$
and $\FII$ containing $s\s$ (for instance if $P\in\SITES_1$ then $s\s$
lies on an arc of $\FI$ and in a cell of $\FII$). The mixed vertex
$s\s$ lies at equal distance of the three features to which the arc
and cell belong (two of which are features of~$P$).

\subsubsection{When $I$ is empty}

It remains to consider the case where $I$ is empty, that is, no vertex
of $\TREE$ lies in $\REG(P)$. Nevertheless, the region $\REG(P)$ may
intersect a single arc $\alpha$ of $\TREE$. 
\referee{Referee \#1: '..and there are then two mixed vertices on $\alpha$ that we need to find.' Why are there two? Please explain.}
\respond{Clarified as follows}
In this case, Lemma~\ref{lem:tree}
tells us that there are two mixed vertices on~$\alpha$ that we need to find. We
first need to identify the arcs of $\TREE$ where this could happen.

Let $p$ and $q$ be two points on the same arc~$\alpha$ of $\MA(P)$.
We define the \emph{cylinder} $\cyl(p,q)$ of the pair $(p,q)$ as
\[
\cyl(p,q) = \bigcup_{x\in pq} D_{P}(x) \setminus D_{P}(p),
\]
where the union is taken over all points $x$ on the arc $\alpha$
between $p$ and~$q$.
Since $p$ and~$q$ belong to the same arc $\alpha$ of $\MA(P)$, which is induced by either two edges,
two vertices, or one edge and one vertex of $P$, cylinders may have only  three possible shapes,
illustrated in Fig.~\ref{fig:single-arc}(b, c).
\begin{figure}[t]
  \centerline{\includegraphics{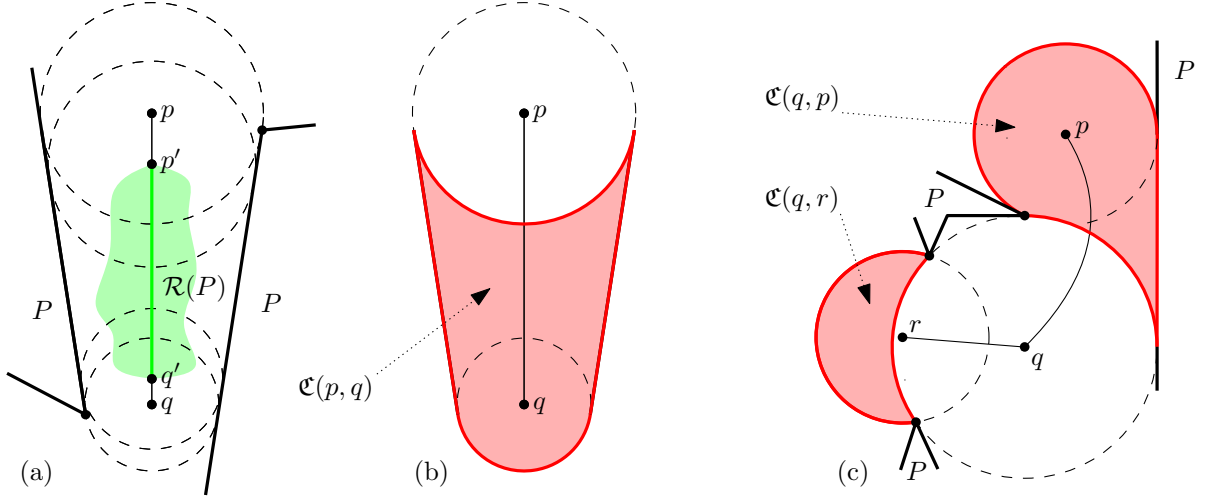}}
  \caption{(a) $pq$ is a sub-arc of $\MA(P)$. $p'q'$ is the
    intersection of $pq$ with $\REG(P)$. We also have $p'q'=\MA(P)
    \cap\REG(P)$. (b) The cylinder $\cyl(p,q)$ of the pair $(p,q)$
    in~(a). (c) Cylinders $\cyl(q,p)$ and $\cyl(q,r)$ illustrate the two other
    possible geometric shapes of a cylinder, whose boundary alternate between a
    part of a feature of $P$ and a circular arc.}
  \label{fig:single-arc}
\end{figure}
We define a condition $G(p,q)$ as follows: Let $Q$ be a site farthest
from $p$, and let $w$ be a feature of $Q$ closest to~$p$.  Then
$G(p,q)$ is true if $w \subset \cyl(p,q)$ or if $Q$
intersects~$D_{P}(q)$.  Note that $G(p,q)$ could possibly depend on
the choice of~$Q$ and~$w$ when the farthest site or the closest
feature is not unique.  We will show below that this choice does not
matter for the correctness of our algorithm.

We can now prove the following two lemmas:
\begin{lemma}
  \label{lem:nc}
  Let $p, q$ be points on the same arc $\alpha$ of $\MA(P)$, such that
  neither $p$ nor $q$ lie in $\REG(P)$.  If $\alpha$ intersects
  $\REG(P)$ between $p$ and $q$, then $G(p,q)$ and $G(q,p)$ both hold.
\end{lemma}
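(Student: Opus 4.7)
The plan is to fix a site $Q$ farthest from $p$ and a feature $w$ of $Q$ closest to $p$, per the definition of $G(p,q)$, and then proceed by case analysis on whether $Q$ intersects $D_P(q)$. The starting observation is that since $p\notin\REG(P)$ we have $d(p,Q)>d(p,P)$, so $Q\cap D_P(p)=\emptyset$ and in particular $w\cap D_P(p)=\emptyset$.

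If $Q\cap D_P(q)\ne\emptyset$, then the second clause in the definition of $G(p,q)$ is immediately satisfied and there is nothing more to do. In the opposite case, where $Q\cap D_P(q)=\emptyset$, I need to establish $w\subset\cyl(p,q)$. Here the key hypothesis is that there is some point $r$ on $\alpha$ strictly between $p$ and $q$ with $r\in\REG(P)$, so that $Q$ has a point in the interior of $D_P(r)$. Invoking Lemma~\ref{lem:path}, the portion of $\alpha$ on which $Q$ is closer than $P$ is a connected subarc; under this case's assumptions it is a bounded subarc delimited by parameters $t_1<t_2$ strictly between $p$ and $q$, with the closest point of $Q$ to $\alpha(t_i)$ lying on $\partial D_P(\alpha(t_i))$ at each endpoint.

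The remaining step is to upgrade this one-witness-point information into the stronger statement that the \emph{whole} feature $w$ lies in $\bigcup_{t\in[p,q]} D_P(\alpha(t))$. I would carry this out using the constrained shape of the cylinder described right after its definition---its boundary alternates between pieces of the two features of $P$ defining $\alpha$ and two circular arcs cut from $\partial D_P(p)$ and $\partial D_P(q)$---together with the connectedness of $w$ (a single corner or line segment) and the disjointness $w\cap D_P(p)=w\cap D_P(q)=\emptyset$ provided by the two cases above. The statement $G(q,p)$ then follows from the symmetric argument, exchanging the roles of $p$ and $q$. I expect this last ``no-escape'' step to be the main obstacle, since producing one point of $Q$ inside some $D_P(\alpha(t))$ is cheap, but pinning down the whole feature $w$ genuinely requires the restricted cylinder geometry.
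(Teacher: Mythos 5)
Your case split---first test whether $Q$ intersects $D_P(q)$, and if not try to show $w\subset\cyl(p,q)$---is logically equivalent to the paper's (which instead first asks whether $Q\subset\cyl(p,q)$ holds), so the overall plan is sound. But your sketch for the second case has a real gap: you propose to trap the feature $w$ inside $\cyl(p,q)$ using the connectedness of $w$ together with $w\cap D_P(p)=w\cap D_P(q)=\emptyset$, yet you have no starting foothold for $w$ inside (or even touching) the cylinder. The witness point $r\in\alpha\cap\REG(P)$ only yields $Q\cap D_P(r)\ne\emptyset$, i.e., some point of the \emph{site} $Q$ inside the cylinder---that point could lie on a feature of $Q$ other than $w$, and nothing forces $w$ itself to meet $\cyl(p,q)$. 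The connectedness you need to invoke is that of $Q$, not of $w$: $Q$ meets $\cyl(p,q)$ (since $Q$ meets $D_P(r)\subset\cyl(p,q)\cup D_P(p)$ while $Q\cap D_P(p)=\emptyset$); $Q$ cannot cross the pieces of $\partial\cyl(p,q)$ lying on $P$ (the sites are disjoint), it avoids the arc of $\partial D_P(p)$, and under your case-2 hypothesis it avoids the arc of $\partial D_P(q)$ as well; since those three kinds of arcs exhaust $\partial\cyl(p,q)$, the connected set $Q$ cannot escape, giving $Q\subset\cyl(p,q)$ and hence $w\subset Q\subset\cyl(p,q)$. This is exactly the paper's argument stated in contrapositive form. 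Finally, the detour through Lemma~\ref{lem:path} and the subarc $[t_1,t_2]$ is unnecessary: a single point of $\alpha\cap\REG(P)$ already supplies $Q\cap\cyl(p,q)\ne\emptyset$, which is all the trapping step needs.
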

\begin{proof}
  Since the statement is symmetric, we only need to show~$G(p,q)$. Let
  $Q$ be a site farthest from $p$, and let $x$ be a point between $p$
  and $q$ on $\alpha$ that lies in $\REG(P)$. This implies that $Q$
  intersects $D_{\SITES}(x) = D_{P}(x)$.  Since $Q$ does not intersect
  $D_{P}(p)$, we deduce that $Q$ intersects $\cyl(p,q)$.  If $Q$ does
  not lie entirely in $\cyl(p,q)$ then, since the sites $P$ and $Q$ are
  disjoint, $Q$ must cross the common boundary of $\cyl(p,q)$ and $D_{P}(q)$
  (see Fig.~\ref{fig:single-arc}(b, c)). Thus $Q$ intersects $D_{P}(q)$
  and indeed $G(p,q)$ holds.
\referee{Referee \#2: Figure 7,a+b, illustrate only one situation possible. It seems that a more
general statement is needed in the proof (perhaps that the boundary of C(p,q) always consists of
pieces of P and of the circular arcs belonging to $D_P(p)$ and $D_P(q)$.}
\respond{The issue is clarified by illustrating all possible shapes of a
cylinder in the Figure~\ref{fig:single-arc}(b, c) (see also modified caption).
And by giving more details in the end of the proof above.}
\end{proof}
\begin{lemma}
  \label{lem:queen}
  Let $p, q$ be points on the same arc $\alpha$ of a tree $\TREE$ of
  $\MA(P)$ that admits no vertex in $\REG(P)$. If neither $p$ nor $q$
  lie in $\REG(P)$ and both $G(p,q)$ and $G(q,p)$ hold, then all
  points in $\TREE\cap \REG(P)$ lie on~$\alpha$.
\end{lemma}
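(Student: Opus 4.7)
My plan is to proceed by contradiction, applying Lemma~\ref{lem:path} to a path in $\TREE$ that leaves $\alpha$, in a spirit dual to the proof of Lemma~\ref{lem:nc}.

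First, by Lemma~\ref{lem:tree} the set $\TREE \cap \REG(P)$ is a connected subtree of~$\TREE$, and by hypothesis it contains no vertex of~$\TREE$. A connected subset of a tree that misses every branch point and every leaf is contained in the relative interior of a single arc; call this arc~$\beta$ (the statement is vacuous if $\TREE \cap \REG(P)$ is empty). The goal becomes $\beta = \alpha$.

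Suppose for a contradiction that $\beta \neq \alpha$, and pick any $y \in \TREE \cap \REG(P)$. Let $e_{a}, e_{b}$ denote the endpoints of $\alpha$, labeled so that along $\alpha$ the order is $e_{a}, p, q, e_{b}$. The unique path in $\TREE$ from $y$ to $\alpha$ enters $\alpha$ through $e_{a}$ or $e_{b}$; the two cases are symmetric under the exchange of $p$ and $q$ (which swaps the roles of $G(p,q)$ and $G(q,p)$), so assume it enters through $e_{b}$. Let $\gamma$ be the path in $\TREE$ from $p$ to $y$; then $\gamma$ traverses $\alpha$ from $p$ through $q$ to $e_{b}$ and then leaves $\alpha$ to reach $y$. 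Applying $G(q,p)$, with $Q'$ a site farthest from $q$ and $w'$ a feature of $Q'$ closest to $q$, I have either $w' \subset \cyl(q,p)$ or $Q' \cap D_{P}(p) \neq \emptyset$. In the first case $w' \subset D_{P}(z)$ for some $z$ on $\alpha$ between $p$ and $q$, so $\DIST_{Q'}(z) < \DIST_{P}(z)$; in the second case the same inequality holds with $z := p$. Either way, some point $z$ on the sub-path of $\gamma$ from $p$ to $q$ is strictly closer to $Q'$ than to $P$.

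To close the argument, I note that $y \in \REG(P)$ yields $\DIST_{Q'}(y) < \DIST_{P}(y)$, while $q \notin \REG(P)$ together with $Q'$ being farthest from $q$ gives, under the general position assumption, $\DIST_{Q'}(q) > \DIST_{P}(q)$. Hence on the path $\gamma$ from $p$ to $y$, the subset of points strictly closer to $Q'$ than to $P$ contains $z$ and $y$ but misses the intermediate point $q$, contradicting Lemma~\ref{lem:path}. The main subtlety I anticipate is making sure both branches of the definition of $G(q,p)$ genuinely place $z$ on the $[p,q]$ portion of $\gamma$ (rather than outside $\alpha$), and verifying that the symmetric case where the path to $y$ exits $\alpha$ through $e_{a}$ really requires the companion hypothesis $G(p,q)$; this explains why both conditions must be assumed.
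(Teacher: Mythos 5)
Your proof is correct and takes the same approach as the paper: assume a point of $\TREE\cap\REG(P)$ lies off $\alpha$, apply whichever of $G(p,q)$ or $G(q,p)$ matches the side of $\alpha$ on which that point sits to produce a point between $p$ and $q$ that is closer to the chosen farthest site, and contradict Lemma~\ref{lem:path} because the endpoint ($p$ in the paper, $q$ in your version) lies between two points of $\gamma_Q$ yet is not in $\gamma_Q$. (The paper fixes the symmetric case where $p$ separates $x$ from $q$ and uses $G(p,q)$; you fix the mirror case and use $G(q,p)$; one small slip is that $w'\subset\cyl(q,p)$ does not give $w'\subset D_P(z)$ for a single $z$, since the cylinder is a union of disks --- but the weaker and sufficient fact that $Q'$ meets some $D_P(z)$ with $z$ between $p$ and $q$ does follow.)
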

\begin{proof}
  Assume, for a contradiction, that there exists a point $x$ on
  $\TREE$ in $\REG(P)$ not between~$p$ and~$q$ on~$\alpha$; assume,
  without loss of generality, that $p$ lies on the path joining $x$
  and $q$ in $\TREE$.  Since $G(p,q)$ holds, there is a farthest
  site~$Q \neq P$ from~$p$ such that $Q$ intersects~$\cyl(p,q)$.
  Therefore there exists a point $y$ between~$p$ and~$q$ on $\alpha$
  such that $y$ is closer to~$Q$ than to~$P$.  Since $x\in \REG(P)$ by
  assumption, $x$ is closer to $Q$ than to~$P$. We have shown that $x$
  and $y$ are closer to $Q$ than to~$P$ but the point~$p$, which is
  between $x$ and $y$ on $\TREE$, is closer to $P$ than to~$Q$. This
  contradicts Lemma~\ref{lem:path}, and concludes the proof.
\end{proof}
Let us call an arc $\alpha$ connecting vertices $p$ and~$q$ of $\TREE$
a \emph{candidate arc} if $G(p,q)$ and $G(q,p)$ both hold.
Lemma~\ref{lem:queen} implies immediately that if there are two
candidate arcs, then $\TREE \cap \REG(P)$ is empty, and there are no
mixed vertices on~$\TREE$.

Since we have point-location data structures for $\FI$ and $\FII$, we
can test the condition $G(p,q)$ in time $O(\log n)$ for a given arc
$\alpha$ in $\MA(P)$ and two points $p,q \in \alpha$.  This allows to
identify all candidate arcs in $O(m \log n)$ time, where $m$ is the
complexity of~$\TREE$.  If there are zero or more than one candidate
arcs, we can stop immediately, as there are no mixed vertices
on~$\TREE$.

It remains to consider the case where there is a single candidate
arc~$\alpha$ in $\TREE$. We again apply parametric search, using an
unknown point $s\s$ in $\alpha \cap \REG(P)$ as the query
point. During the point-location query, we maintain an interval $pq$
on $\alpha$ that must contain $s\s$ (if $s\s$ exists at all). To
implement a primitive query, we must determine the location of $s\s$
with respect to a comparator~$\gamma$. If the current interval $pq$ on
$\alpha$ does not intersect $\gamma$, we can proceed immediately,
otherwise we get a sequence of points $p = x_{0}, x_{1}, x_{2}, \dots, x_{s} =
q$ on~$\alpha$, where $2 \leq s \leq 5$ 
\referee{Referee \#1: ' ...otherwise we get a sequence of points... where $2\leq s \leq 5$.' Where do you get this bound on s from?}
\respond{Clarified as follows}
(since there are at most four
intersection points $x_1, x_2, \dots, x_{s-1}$ between two curves of degree at
most two). We first test if some
$x_{i} \in \REG(P)$ in $O(\log n)$ time. If so, we abort the process,
and use the method discussed above to find the mixed vertices on the
arc between $p$ and $x_{i}$ and between $x_{i}$ and $q$. If no $x_{i}$
lies in $\REG(P)$, we test the conditions $G(x_{i}, x_{i+1})$ and
$G(x_{i+1}, x_{i})$ for each consecutive pair. If the conditions hold
for no pair or for more than one pair, we can stop immediately, as
there cannot be a point of $\REG(P)$ on $\alpha$. If there is exactly
one pair, we have found the location of $s\s$ with respect to the
comparator~$\gamma$, and we continue the point location query.

If both point location queries on $s\s$ in $\FI$ and $\FII$ terminate
without encountering a point in~$\REG(P)$, the current arc $pq$ of
$\alpha$ lies entirely on an arc of $\FI$ and in a cell of $\FII$
(assuming that $P\in\SITES_1$). Moreover, we get this arc and cell
from the two point location queries. The two mixed vertices $s\s$ on
the arc $pq$ are then both defined by the same three features that
define this arc and cell,\footnote{Three features may define two mixed
  vertices, for instance in the simple special case where the sites
  consist of a V-shaped polygonal site and a point.} and they can be
computed in constant time.

\subsubsection{Avoiding randomization}

\referee{Referee \#1: Section 4.1.3: You're going out of your way not to tell any details
here... What are the predicates? Left/Right/Up/Below? What set of other parameters is the polynomial
expression depending on?}
\respond{We find this section quite clear but we nonetheless added the following footnote and
parenthesis, which we hope might help.}
 
Finally, we argue that, instead of using a randomized point-location
data structure in the above algorithm, we can use any other data
structure, such as the one of
Edelsbrunner~\etal~\cite{edelsbrunner86}, as long as all predicates\footnote{For instance,  in the case of Mulmuley's 
point-location data structure~\cite{m-fppa-90}, the predicates
(i) and (ii) mentioned at  the beginning of Section~\ref{sec:mixed}.}
used in the associated point-location algorithm are answered by
evaluating the signs of polynomial expressions of bounded degree in
the input data.  If one predicate is answered by evaluating the sign
of several polynomial expressions, it can be split into several
predicates, each of which corresponds to exactly one polynomial
expression. Then, a predicate corresponds to a polynomial expression
of bounded degree that depends on the~$x$ and~$y$ coordinates of the
query point and a set of other parameters (for instance, the coordinates of a point, or the coefficients of an implicit equation
of a curve, against which the query point is tested); this expression, seen as a
polynomial in~$x$ and~$y$, defines a curve,~$\gamma$, of bounded
degree. Recall now that we perform a point location query using an
unknown query point~$s\s$ that lies on a straight or parabolic
arc. The curve~$\gamma$ associated with a polynomial predicate splits
this arc into a bounded number of pieces along which the sign of the
polynomial is constant. To answer the query, it suffices to use the
method described above on the boundary points of these pieces.

\section{Concluding remarks}
\label{sec:conclusion}

We have considered, in this paper, farthest-site Voronoi diagrams of
$k$ disjoint connected polygonal sites in general position and of
total complexity~$n$. In particular, we proved that such diagrams have
complexity $O(n)$ and that they can be computed in $O(n \log^{3} n)$
time.

We have seen that Voronoi regions can consist of several unbounded
components.  However, since the pattern $PQPQ$ cannot appear at
infinity, it is always possible to connect the components of one
Voronoi region by drawing non-crossing connections ``at infinity,''
and so we can think about Voronoi regions as being connected at
infinity.  This curious property can perhaps be better understood by
studying the same problem on the sphere.  Here the resulting structure
is simpler: The bisector of two polygons is a single closed curve, and
the family of bisectors of a fixed polygon~$P$ with the other polygons
forms a collection of pseudo-circles. (The closest-site Voronoi
diagram of three disjoint polygons cannot have three vertices.)  The
farthest-site Voronoi region of~$P$ is the intersection of the
pseudo-disks that do not contain it, and is thus either empty or
simply connected~\cite{m-uevkt-56}. The $O(k)$ bound on the number of
pure vertices is then a simple consequence of the planarity of this
diagram.  (With some care, an alternate proof of
Lemma~\ref{lem:bounded-connected} based on this pseudo-disk property
could be given.)

Farthest-site Voronoi diagrams are related to the function\hfill\break
$f_F:
x\mapsto\arg\max_{S\in\SITES}{\left(\min_{y\in S}{d(x,y)}\right)}$
which returns the farthest site to a query point. The standard
closest-site Voronoi diagram corresponds to the function
$x\mapsto\arg\min_{S\in\SITES}{\left(\min_{y\in S}{d(x,y)}\right)}$
which returns the closest site to a query point. Voronoi diagrams
induced by other similar functions have also been considered in the
literature.  In particular, the diagram obtained by considering the
``dual'' function of $f_F$, that is
$x\mapsto\arg\min_{S\in\SITES}{\left(\max_{y\in S}{d(x,y)}\right)}$ is
the so-called Hausdorff Voronoi diagram; see \cite{papadopoulou04} and
references therein.


A Hausdorff diagram is typically defined for a collection of sets of
points in convex position because the maximum distance from a point to
a polygonal site is realized at a vertex of the polygon's convex
hull. Papadopoulou showed that the size of the Hausdorff Voronoi
diagram is $\Theta(n + M)$, where $n$ is the number of points in the
collection and $M$ is the number of so-called ``crucial supporting
segments'' between pairs of ``crossing sets'' (a pair of sets is
crossing if the convex hull boundary of their union admits more than
two ``supporting segments'', that is, segments joining the convex
hulls of each set; such a segment is said crucial if it is enclosed in
the minimum enclosing circle of each set.)

A fast parallel algorithm was obtained by Dehne~\etal for the special case
when $M=0$~\cite{DehneMT06}. They gave an $O((n \log^4 n)/p)$ time
parallel algorithm for the diagram construction on $p$ processors and a $O(n
\log^4 n)$ time sequential algorithm. Their algorithm is similar to ours and
differs mainly is the way the purple chains are constructed. Indeed, the arcs
of a Hausdorff Voronoi diagram are (straight) line segments; this permits the
use of an ad hoc data structure for finding the ``mixed'' vertices. In
contrast, arcs in a farthest-polygon Voronoi diagram can be curved and we
offer a technique for computing the purple chains using parametric
search, which is more efficient and more general.
This generality has already found another application in the
construction of farthest-site Voronoi diagrams for the geodesic
distance~\cite{Bae09}. We also believe that our technique could be applied in
the sequential algorithm of Dehne~\etal to improve its time complexity to
$O(n\log^3 n)$.

On the other hand, we do not provide a parallel algorithm for
computing farthest-polygon Voronoi diagrams. It would be of 
interest to study the feasibility of applying the parallel techniques of
Dehne~\etal to the farthest-site Voronoi diagram computation.

\section*{Acknowledgments}

We thank the participants of the 8th Korean Workshop on Computational
Geometry, organized by Tetsuo Asano at JAIST, Kanazawa, Japan,
Aug.~1--6, 2005.

\bibliographystyle{geobook}
\bibliography{biblio}

\end{document}